\theoremstyle{plain}
\newtheorem{theorem}{Theorem}
\theoremstyle{definition}
\newtheorem{definition}{Definition}
\newtheorem{lemma}[definition]{Lemma}
\newtheorem{remark}[definition]{Remark}
\DeclarePairedDelimiter{\abs}{|}{|}
\DeclarePairedDelimiter{\bracks}{[}{]}
\DeclarePairedDelimiter{\parens}{(}{)}
\def\maj#1{\textsc{maj-}$#1$}
\newcommand{\nand}{\textsc{nand}}
\newcommand{\xnand}{\textsc{xnand}}
\newcommand{\den}{\textsc{den}}
\newcommand{\projone}{\textsc{projlsb}}
\newcommand{\projtwo}{\textsc{projmsb}}
\newcommand{\lift}{\textsc{lift}}
\newcommand{\errornand}{\textsc{enand}}
\begin{document}

\title{Reliable computation by large-alphabet formulas in the presence of noise}

\author{Andrew K. Tan}
  \email{aktan@mit.edu}
  \affiliation{Department of Physics, Co-Design Center for Quantum Advantage, Massachusetts Institute of Technology, Cambridge, Massachusetts 02139, USA}
\author{Matthew Ho}
  \email{mattho@mit.edu}
  \affiliation{Department of Mathematics, Massachusetts Institute of Technology, Cambridge, Massachusetts 02139, USA}
\author{Isaac L. Chuang}
  \affiliation{Department of Physics, Co-Design Center for Quantum Advantage, Massachusetts Institute of Technology, Cambridge, Massachusetts 02139, USA}
  \affiliation{Department of Electrical Engineering and Computer Science, Massachusetts Institute of Technology, Cambridge, Massachusetts 02139, USA}

\date{\today}

\begin{abstract}
  We present two new positive results for reliable computation using formulas over physical alphabets of size \(q > 2\).
  First, we show that for logical alphabets of size \(\ell = q\) the threshold for denoising using gates subject to \(q\)-ary symmetric noise with error probability \(\varepsilon\) is strictly larger than that for Boolean computation, and we show that reliable computation is possible as long as signals remain distinguishable, i.e. \(\epsilon < (q - 1) / q\), in the limit of large fan-in \(k \rightarrow \infty\).
  We also determine the point at which generalized majority gates with bounded fan-in fail, and show in particular that reliable computation is possible for \(\epsilon < (q - 1) / (q (q + 1))\) in the case of \(q\) prime and fan-in \(k = 3\).
  Secondly, we provide an example where \(\ell < q\), showing that reliable Boolean computation, \(\ell = 2\), can be performed using \(2\)-input ternary, \(q = 3\), logic gates subject to symmetric ternary noise of strength \(\varepsilon < 1/6\) by using the additional alphabet element for error signaling.
\end{abstract}


\maketitle
\vspace{-2em}


\section{Introduction}
\label{sec:intro}
  The problem of performing computation with noisy components was first studied by von Neumann in 1952 \cite{von-neumann1956probabilistic}.
  Motivated by understanding the robustness of information processing in biological systems, he proposed a toy model of Boolean computation using noisy circuits in which Boolean-valued functions were computed through composition of basic gates with bounded fan-in.
  Interestingly, von Neumann showed that this model admitted a fault-tolerant regime: for circuits with noise parameterized by strength \(\varepsilon\), there exists a threshold \(\beta\) for which noisy circuits composed of gates subject to error \(\varepsilon < \beta\) could simulate a noiseless circuit of size \(N\) with probability of error bounded by \(\varepsilon_{\text{L}}\) with a modest overhead in the number of gates.
  This overhead, a multiplicative factor of \(\Theta(\log(N / \varepsilon_{\text{L}}))\) was argued by von Neumann \cite{von-neumann1956probabilistic} and later made rigorous by Dobrushin and Ortyukov \cite{dobrushin1977lower}.
  The key observation made by von Neumann was that by encoding data in an error correcting code and interleaving computation with a constant-depth error correction circuit, the error of the Boolean circuit's output could be bounded away from \(1/2\) in a manner independent of circuit depth; such a circuit computes a function \emph{reliably}, in a way which can be made mathematically precise.
  A more modern statement of this result allows for the possibility of an error correction circuit whose depth grows with error rate leading to a more general, polylogarithmic, overhead \cite{nielsen2010quantum}.
  The positive result of \cite{von-neumann1956probabilistic} has since been elegantly tightened for formulas (i.e. circuits of fan-out one) \cite{hajek1991on-the-maximum,evans1998on-the-maximum,evans2003on-the-maximum}, as summarized in \cref{tab:positive-result-summary}.
  Despite the use of formulas, a qualitatively different setting than circuits, these positive results for formulas nonetheless follow the essence of von Neumann's original construction, whose hallmark is the interleaving of distinct computation and error correction stages.

  Given a model of noisy computation, i.e. a set of elementary operations over a finite alphabet along with a specification of the error process,  the \emph{von Neumann construction} for noisy formula-based fault-tolerance may be distilled into the following two-step process:
  \begin{enumerate}
    \item
    Show that a given denoising operation (e.g. majority) has \(\ell \ge 2\) stable fixed-points in the probability simplex over the physical alphabet up to \(\varepsilon < \beta'\) where \(\varepsilon\) parameterizes the error model:
    each fixed-point along with its basin of attraction under this denoising operation, is associated with a logical state.
    \item
    Exhibit a set of gates capable of performing operations on the logical states, maintaining their outputs in the correct basins of attraction for \(\varepsilon < \beta \le \beta'\).
  \end{enumerate}
  We call \(\beta'\) the \emph{denoising threshold} of this particular construction, and \(\beta\) its \emph{computation threshold}.

  Though von Neumann's original analysis was limited to the toy model of noisy Boolean circuits, inspired by biological systems, he left open the possibility for reliable computation over non-binary alphabets, which have found renewed interest in certain modern computational models.
  Specifically, multivalued logic \cite{smith1981the-prospects,smith1988a-multiple} has recently been studied both for providing more natural embeddings for certain computational problems \cite{epstein1974the-development,saffiotti1995a-multivalued,dubrova1999multiple-valued,rine2014computer}, and for admitting efficient implementations in new materials \cite{morisue1989a-novel,dubrova1999multiple-valued,jo2021recent}.
  More recently still, multi-level quantum systems, i.e. qudits, have been shown to offer promising avenues (beyond qubits) for fault-tolerant quantum information processing \cite{bocharov2017factoring,wang2020qudits}.

  Motivated by moving beyond Boolean alphabets, we make initial steps towards extending formal positive fault-tolerance results using the von Neumann construction for computation over larger alphabets.
  Formally, we define an \emph{alphabet} of size \(q \ge 2\) to be the set \([q] \equiv \{0, 1, \dots, q - 1\}\);
  and we refer to elements of the alphabet set, either physical or logical, as \emph{characters}.
  A number of qualitative differences emerge in the study of larger alphabet fault-tolerance, with two specific opportunities (and matching challenges) arising.

  First is the opportunity allowed by larger alphabets to recover from higher error rates, compared with the case for a binary alphabet.
  This is due to the simple fact that any particular error is increasingly less likely to overwhelm the signal as \(q\) grows.
  However, it is not manifest that such error correctability extends to allow reliable \emph{computation} over larger alphabets for gates with bounded fan-in.

  Second, when \(q > 2\), logical computation may be performed without using the entire alphabet --- in other words, the logical alphabet size \(\ell\) can be strictly smaller than the physical alphabet size \(q\). 
  As a result, members of the physical alphabet that are not part of the logical alphabet can be used to signal that an error has occurred.  Whether this opportunity for ``error signaling'' can improve fault-tolerance thresholds, however, has never before been rigorously established.

  
  \begin{table}
  \centering
    \begin{threeparttable}
    \caption{Overview of positive fault-tolerance results for gates subject to binary, i.e. \(q = 2\) symmetric noise and our extensions to \(q > 2\).
    The first positive results were shown for circuits by von Neumann, who analyzed computation with both \(\{\maj{[2, 3]}, \nand\}\) and \(\{\nand\}\) gate sets \cite{von-neumann1956probabilistic}.
    These results were subsequently tightened: first for the \(\{\maj{[2, 3]}, \nand\}\) gate set by Hajek and Weller \cite{hajek1991on-the-maximum}, and for the more challenging fan-in \(2\) gate set \(\{\nand\}\) by Evans and Pippenger \cite{evans1998on-the-maximum}.
    Evans and Schulman demonstrated tight thresholds for \(\{\maj{[2, k]}, \xnand\}\), \(k\) odd \cite{evans2003on-the-maximum}.
    We derive results for larger \(k\) and \(q\) where \(C^{[q, k]}\) are defined in \cref{eq:big-C-def}.
    We also analyze an alternate scheme where some alphabet elements are used to signal errors assuming access to \den~and \errornand~gates defined in \cref{sec:universal-ternary-computation}. Other gates also given in \cref{sec:universal-ternary-computation} are used to project to and lift from a binary logical alphabet, but are not used for computation or denoising in this scheme.
}
    \label{tab:positive-result-summary}
      \begin{tabular}{|l|l|l|l|l|}
      \hline
      Denoising Gate & Denoising Threshold & Computation Gate(s) & Computation Threshold & Reference \\ 
      \hline\hline
      \maj{[2, 3]} & 1/6 & \nand & \(\approx 0.0073\) & \cite{von-neumann1956probabilistic} \\ \hline
      \nand & - & \nand\tnote{1} & \(\approx 0.0107\) & \cite{von-neumann1956probabilistic} \\ \hline
      \maj{[2, 3]} & \(1/6\)  & \xnand & \(1/6\) & \cite{hajek1991on-the-maximum}\tnote{2} \\ \hline
      \nand & \((3 - \sqrt{7})/4\) & \nand & \((3 - \sqrt{7})/4\) & \cite{evans1998on-the-maximum}\tnote{2} \\ \hline
      \maj{[2, k]}, \(k\) odd & \(\left.\frac{1}{2} - 2^{k-2} \middle/ k \binom{k-1}{\frac{k-1}{2}} \right.\) & \xnand & \(\left.\frac{1}{2} - 2^{k-2} \middle/ k \binom{k-1}{\frac{k-1}{2}} \right.\) & \cite{evans2003on-the-maximum}\tnote{2} \\ \hline
      \maj{[q, k]}, \(k\) odd & \(\frac{q-1}{q} \frac{C^{[q, k]} - 1}{C^{[q, k]}}\) & \(\textsc{add}^q\), \(\textsc{mul}^q\) & \(\frac{q-1}{q} \frac{C^{[q, k]} - 1}{C^{[q, k]}}\)~\tnote{3} & \cref{ssec:symmetric-computation-threshold} \\ \hline
      \maj{[q, 3]}, \(q\) prime & \(\frac{q - 1}{q (q + 1)}\) & \(\textsc{add}^q\), \(\textsc{mul}^q\) & \(\frac{q - 1}{q (q+ 1)}\) & \cref{lem:symmetric-computation-threshold-example} \\ \hline
      \(\den\) & 1/6 & \errornand\tnote{2} & 1/6 & \cref{sec:universal-ternary-computation} \\ \hline

      \end{tabular}
    \begin{tablenotes}
    \item[1] Result is for circuits.
    \item[2] Universal for binary computation.
    \item[3] Computation threshold is asymptotic for \(k \rightarrow \infty\) but conjectured to hold generally.
    \end{tablenotes}
    \end{threeparttable}
  \end{table}


  Here, we perform an analysis of the thresholds for reliable computation using gates subject to \(q\)-ary symmetric noise of strength \(\varepsilon\), i.e. where an error occurs with probability \(\varepsilon\), leaving the output in any of the \(q - 1\) erroneous states with equal probability.
  We study computation over physical alphabets of size \(q > 2\) using the two step von Neumann construction.
  We consider two main settings: one where \(\ell = q\), and one for which \(\ell < q\).
  For the first setting, where computation is performed with the logical alphabet being the same size as the physical alphabet:
  \begin{itemize}
    \item
    We generalize the results of Evans and Schulman \cite{evans2003on-the-maximum} to \(q > 2\) alphabets.
    In particular, we find that the denoising threshold with gates subject to \(q\)-ary symmetric noise of strength \(\varepsilon\) is improved over larger physical alphabet sizes (\cref{lem:lower-bound-on-majqk-denoising-threshold}, in \cref{ssec:symmetric-denoising-threshold}).
    \item
    We show that the set of functions generalizing the Boolean \textsc{xor} over larger alphabets can be reliably computed up to the previously shown denoising threshold (\cref{thm:reliable-computation-of-pa-functions}, in \cref{ssec:symmetric-computation-threshold}).
    \item
    We show that reliable universal computation is possible for \(\epsilon < (q - 1)/q\) asymptotically as \(k \rightarrow \infty\), and is possible up to the previously shown denoising threshold for finite fan-in \(k = 3\) and \(q\) prime in \cref{ssec:symmetric-computation-threshold-example}).
  \end{itemize}
  For the second setting, in which computation is performed with a subset of the available physical alphabet such that some of the physical alphabet can be employed to signal errors and help with denoising:
  \begin{itemize}
    \item
    We define error signaling protocols for \(\ell < q\) and employ these to demonstrate that denoising of two logical states over a physical ternary alphabet using \(2\)-input gates is possible for \(\varepsilon < 1 / 6\) (\cref{lem:fixed-point-convergence}, in \cref{ssec:boolean-denoising-threshold}.).
    \item
    Using error signaling we prove that universal Boolean (\(\ell = 2\)) computation using \(2\)-input gates is possible up to the conjectured denoising threshold when using \(q = 3\) (\cref{lem:reliable-boolean-computation-over-ternary-alphabet}, in \cref{ssec:boolean-computation-threshold}). 
    We use this to show that universal ternary computation up to the conjectured denoising threshold when $q=3$ is also possible (\cref{thm:universal-ternary-computation}, in \cref{sec:universal-ternary-computation}).
  \end{itemize}
  We present these results beginning with definitions and framework in \cref{sec:definitions}, followed by the large-alphabet computation \(\ell=q > 2\) scenario in \cref{sec:large-alphabet-computation} and the embedded Boolean computation \(\ell < q\) scenario in \cref{sec:universal-ternary-computation}.  
  Possible extensions and concluding observations are discussed in \cref{sec:conclusion}.

\section{Definitions and Framework}
\label{sec:definitions}
  First, we make precise the set of functions that can be computed over an alphabet of size \(q\) by a circuit using an elementary set of finitary gates.
  This set is essentially the closure of all functions that be computed by connecting our elementary gates using wires carrying \(q\)-ary signals: allowing for the permutation, duplication, and termination of wires. 
  Such a closed set is called clone in the theory of universal algebras, as we recall briefly below:
  \begin{definition}[Superpositions and Clones]
  \label{def:clone}
    Let \(F\) be a set of finitary operations \(f_i: [q]^{n_i} \rightarrow [q]\) over alphabet \([q]\).

    We call a function \(g\) a \emph{superposition over \(F\)} if it can be constructed by composing a finite number of operations in \(F\), allowing for natural wirings of variables, i.e. function inputs, between compositions.
    From \cite[Section 1.2]{lau2006function}, these rearrangements, called superposition operations, are as follows:
    \begin{enumerate}[\hspace{1cm}i)]
      \item
      permutation of variables,
      \item
      identification of variables,
      \item
      addition of new variables, and;
      \item
      substitution of variables by functions.
    \end{enumerate}

    A \emph{clone} is a set \(F\) of finitary operations \(f: [q]^n \rightarrow [q]\), \(n \ge 0\) such that
    \begin{enumerate}[\hspace{1cm}i)]
      \item 
      \(F\) contains all projection functions, i.e. \(f_i \in F\) such that \(f_i(X_1, \dots, X_n) = X_i\).
      \item
      \(F\) is closed under superposition.
    \end{enumerate}
    We refer the reader to \cite{lau2006function} for more details regarding this formalism.
  \end{definition}
  The closure of an elementary set of gates \(F\) under superposition and projections formalizes what is meant by ``the set of functions that can be computed by a circuit composed of gates from \(F\).''
  The clones of Boolean functions have been completely classified by Post \cite{post1941the-two-valued}, and classifications exist for larger alphabets \cite{lau2006function}.
  Since clones are closed under composition, we are often interested in the clone's generating set, i.e. a minimal set of gates capable of computing all functions in the clone through composition.
  At one limit, we have the complete clone:
  \begin{definition}[Universal \(q\)-ary computation]
  \label{def:universal-computation}
    We say a particular set of gates is universal for \(q\)-ary computation if the clone that it generates contains all \(n\) input functions \(f: [q]^n \rightarrow [q]\).
    This property is also commonly referred to as completeness \cite{lau2006function,slupecki1972criterion}.
  \end{definition}
  In the previous literature on Boolean computation, the \(2\)-input \nand~gate \cite{von-neumann1956probabilistic,evans1998on-the-maximum}, being itself universal, has been used both to perform computation as well as construct the denoising operation.
  This becomes more complicated over larger alphabets, as certain non-universal clones may admit larger thresholds (e.g. \cref{sec:universal-ternary-computation}).

  Additionally, in the setting of larger alphabets, a distinction can be made between the \emph{physical} and \emph{logical} alphabets.
  Specifically, operating over a physical alphabet of size \(q\), we may in general choose a smaller number of logical states over which to perform reliable computation through the use of error correction and error detection.
  We study an example where the logical alphabet is strictly smaller than the physical alphabet in \cref{sec:universal-ternary-computation}.

  Next, we formalize the notion of reliable computation in the presence of noise.
  \begin{definition}[\(\delta\)-reliability]
  \label{def:delta-reliability}
    Let \(f: [q]^n \rightarrow [q]\) be a \(q\)-ary function.
    We say a noisy circuit \(F_\varepsilon\) computes \(f\) \emph{\(\delta\)-reliably} if the probability of error over all inputs is at most \(\delta\), or equivalently
    \begin{equation}
      \min_{x_1, \dots, x_n \in [q]}\Pr[F_\varepsilon(x_1, \dots, x_n) = f(x_1, \dots, x_n)] \ge 1 - \delta.
    \end{equation}
  \end{definition}

  For large-alphabet computation (i.e. \(q > 2\)), a single parameter is no longer sufficient to characterize the noise in general;
  however, we elect to focus on gates subject to \(q\)-ary symmetric noise which is characterized by a single parameter:
  \begin{definition}[\(q\)-ary symmetric noise]
  \label{def:qary-symmetric-noise}
    We say a \(k\)-input gate over alphabet size \(q\), \(g: [q]^k \to [q]\) is \emph{\(\varepsilon\)-noisy}, denoted \(g_\varepsilon\), if for all inputs \(x_1, \dots, x_k \in [q]\), we have
    \begin{equation}
      \forall y \in [q], y \ne g(x_1, \dots, x_k) \implies \Pr[g(x_1, \dots, x_k) = y] = \frac{\varepsilon}{q - 1}.
    \end{equation}
  \end{definition}
  There are several reasons for this choice of error model.
  Firstly, it is both a natural generalization of the binary symmetric noise studied in \cite{von-neumann1956probabilistic,evans1998on-the-maximum,evans1999signal,evans2003on-the-maximum}, and the polar opposite of binary symmetric noise which concentrates all errors on a single erroneous element of the alphabet.
  In a sense, symmetric noise is also the worst case noise in the sense that, conditioned on an error having occurred, it provides no information about the correct signal.

\section{Large-alphabet computation}
\label{sec:large-alphabet-computation}
  In this Section, we analyze the threshold for reliable computation over larger alphabets i.e. \(q > 2\).
  Following the von Neumann construction, we first derive, in \cref{ssec:symmetric-denoising-threshold}, the denoising thresholds of a generalized majority gate, and then in \cref{ssec:symmetric-computation-threshold} determine the set of functions which can be reliably computed given these denoised inputs for large \(k\).
  Finally, in \cref{ssec:symmetric-computation-threshold-example}, we provide lower-bounds on the computation threshold for the special case of \(k = 3\) and \(q\) prime.

  First, let us only consider inputs subject to \(q\)-ary symmetric noise:
  \begin{definition}[\(q\)-ary symmetric \(a\)-noisy encoding]
  \label{def:a-noisy}
    We say a random variable \(X\) is an \emph{\(a\)-noisy encoding} of \(\hat{x}\) if
    \begin{equation}
    \Pr[X_i = x] =
      \begin{cases}
        1 - a, & \text{if } x = \hat{x}_i\\
        \frac{a}{q - 1}, & \text{otherwise}
      \end{cases}.
    \end{equation}
  \end{definition}

\subsection{Recovery threshold}
\label{ssec:symmetric-denoising-threshold}
  For computation with \(\varepsilon\)-noisy gates, we find that the \(k\)-input majority gate over alphabet of size \(q\) is the optimal denoising gate (in the maximum-likelihood sense) in the case where the logical alphabet size \(\ell = q\).
  This generalized majority gate, \(\maj{[q, k]}: [q]^k \to [q]\), outputs the mode of its inputs;
  ties are broken by choosing the value of the mode with the lowest input index, resulting in a deterministic balanced gate.

  Suppose that input \(X\) is subject to symmetric noise of strength \(a\), then applying a noiseless restoring gate to \(k\) independent copies, \(\{X_i\}_{i=1}^k\), of the noisy signal yields an output \(\maj{[q, k]}_0(X_1, \dots, X_k)\) with error 
  \begin{equation}
  \label{eq:restoring-output-noise-noiseless}
    m_{0}^{[q, k]}(a) 
    \equiv
    \sum_{\ell=0}^{k} \binom{k}{\ell} c^{[q, k]}_{\ell} (1 - a)^\ell a^{k - \ell}, 
  \end{equation}
  where \(c^{[q, k]}_{\ell} \in [0, 1]\) denotes the fraction of assignments over an alphabet of size \(q\) to the restoring gate taking \(k\) inputs, exactly \(\ell\) of which are correct, resulting in an incorrect output.
  Consider the limiting example of a binary alphabet: all assignments with \(\ell < \lfloor k / 2 \rfloor\) (assuming \(k\) odd) and therefore
  \begin{equation}
  \label{eq:binary-cls}
    c^{[2, k]}_{\ell} =
    \begin{cases}
      1, & \text{for } \ell < \lfloor \frac{k}{2} \rfloor \\
      0, & \text{otherwise}
    \end{cases}.
  \end{equation}
  Over larger alphabets, a strict majority is not required for the plurality operation to succeed, therefore allowing stronger denoising (i.e. \(c^{[q, k]}_{\ell} < 1\) for some \(\ell < \lfloor k / 2 \rfloor\)).

  For an \(\varepsilon\)-noisy restoring gate, we have
  \begin{equation}
  \label{eq:restoring-output-noise-noisy}
    m_{\varepsilon}^{[q, k]}(a) = \parens*{1 - \frac{\varepsilon}{q - 1}} m_{0}^{[q, k]}(a) + \varepsilon \parens*{1 - m_{0}^{[q, k]}(a)}.
  \end{equation}
  \begin{remark}[Properties of \(c^{[q, k]}_\ell\)]
  \label{rem:properties-of-c}
    Here, we take note of a few properties of \(c^{[q, k]}_\ell\).
    \begin{enumerate}
      \item
      When a majority of the inputs are correct, no assignment of the remaining inputs results in an error and therefore (assuming \(k\) odd) \(c^{[q, k]}_\ell = 0\) for all \(\ell > \lfloor k / 2 \rfloor\).
      \item
      Furthermore, if fewer than a \(1 / q\) fraction of inputs are correct, all assignments of the remaining inputs results in error, therefore \(c^{[q, k]}_\ell = 1\) for \(\ell < k / q\).
      \item
      Finally, \(c^{[q, k]}_{\ell + 1} \le c^{[q, k]}_\ell\).
    \end{enumerate}
  \end{remark}

  \begin{remark}[Uniform distribution fixed-point]
  \label{rem:maximally-mixed-fixed-point}
    For all \(\varepsilon\), the error rate has a fixed-point at \(a = (q - 1) / q\).
    \begin{align}
      m_{0}^{[q, k]}\parens*{\frac{q - 1}{q}} 
      &= \frac{1}{q^k} \sum_{\ell=0}^{\lfloor k / 2 \rfloor} \binom{k}{\ell} c^{[q, k]}_{\ell} \\
      &= \frac{q - 1}{q}.
    \end{align}
    The final equality is a consequence of the restoring gate being balanced and therefore must output each value for exactly \(1 / q\) of the \(q^k\) possible assignments of inputs.
    Examining \cref{eq:restoring-output-noise-noisy}, we see that the fixed-point of \(a = (q - 1) / q\) is preserved for all \(\varepsilon\).
  \end{remark}

  \begin{figure}
    \centering
    \includegraphics[width=0.75\textwidth]{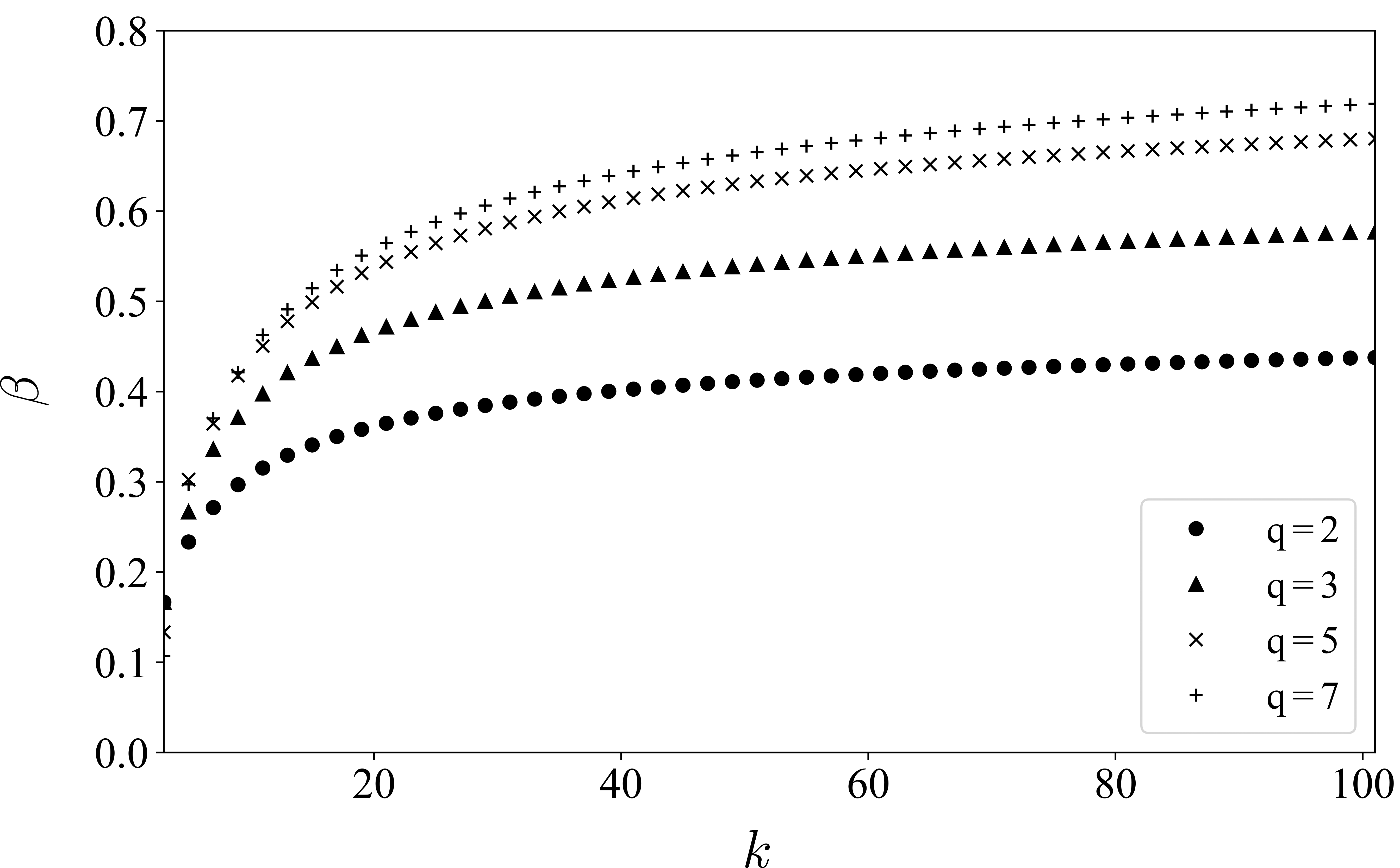}
    \caption{
    Plot of denoising threshold, \(\beta\), using the \maj{[q, k]}~gate as a function of fan-in for different alphabet sizes.
    The plotted threshold values are achieved for symmetrically noisy inputs and are known to be tight for \(q = 2\) \cite{evans2003on-the-maximum}. 
    In the large-\(k\) limit, denoising thresholds approach \((q-1)/q\).
    }
  \label{fig:denoising-threshold}
  \end{figure}

  \begin{lemma}[Lower-bound on \maj{[q, k]} denoising threshold]
  \label{lem:lower-bound-on-majqk-denoising-threshold}
    For \(q \ge 2\) and odd \(k\), let 
    \begin{equation}
    \label{eq:big-C-def}
      C^{[q, k]} \equiv 
      \frac{k}{q^{k - 1}} \sum_{\ell=0}^{\lfloor k / 2 \rfloor} \binom{k - 1}{\ell} (c^{[q, k]}_{\ell} - c^{[q, k]}_{\ell + 1}) \parens*{q - 1}^{k - \ell - 1}.
    \end{equation}

    Then if, 
    \begin{equation}
    \label{eq:maj-qk-transcritical-threshold}
        \textbf{}\varepsilon < \frac{q - 1}{q} \frac{C^{[q, k]} - 1}{C^{[q, k]}}\textbf{}
    \end{equation}
    the following hold:
    \begin{enumerate}[\hspace{1cm}i)]
      \item
      \(\exists \mu_\varepsilon^{[q, k]} \in [0, \frac{q - 1}{q})\) such that \(\forall a \in [0, \frac{q-1}{q}), \lim_{n \rightarrow \infty} m^{[q, k] n}_\varepsilon(a) = \mu_\varepsilon^{[q, k]}\); and
      \item
      \(\exists \nu_\varepsilon^{[q, k]} \in (\frac{q - 1}{q}, 1]\) such that \(\forall a \in (\frac{q-1}{q}, 1], \lim_{n \rightarrow \infty} m^{[q, k] n}_\varepsilon(a) = \nu_\varepsilon^{[q, k]}\).
    \end{enumerate}
  \end{lemma}
  \begin{proof}
    The function \(m^{[q, k]}_\varepsilon(a)\) has at most three fixed-points.
    We show that for all \(\varepsilon < \frac{q - 1}{q} \frac{C^{[q, k]} - 1}{C^{[q, k]}}\), the fixed-point at \(a = (q-1)/q\) given by \cref{rem:maximally-mixed-fixed-point} is unstable.

    Taking the derivative, 
    \begin{align}
      \frac{d m_{0}^{[q, k]}}{da}
      &=
      \frac{d}{da} \sum_{\ell=0}^{\lfloor k / 2 \rfloor} \binom{k}{\ell} c^{[q, k]}_{\ell} (1 - a)^\ell a^{k - \ell} \\
      &=
      \sum_{\ell=0}^{\lfloor k / 2 \rfloor} (k - \ell) \binom{k}{\ell} c^{[q, k]}_{\ell} (1 - a)^\ell a^{k - \ell - 1} - \sum_{\ell=0}^{\lfloor k / 2 \rfloor} \ell \binom{k}{\ell} c^{[q, k]}_{\ell} (1 - a)^{\ell - 1} a^{k - \ell} \\
      &=
      k \sum_{\ell=0}^{\lfloor k / 2 \rfloor} \binom{k - 1}{\ell} c^{[q, k]}_{\ell} (1 - a)^\ell a^{k - \ell - 1} - k \sum_{\ell=0}^{\lfloor k / 2 \rfloor - 1} \binom{k - 1}{\ell} c^{[q, k]}_{\ell + 1} (1 - a)^{\ell} a^{k - \ell - 1} \\
      &=
      k \sum_{\ell=0}^{\lfloor k / 2 \rfloor} \binom{k - 1}{\ell} (c^{[q, k]}_{\ell} - c^{[q, k]}_{\ell + 1}) (1 - a)^\ell a^{k - \ell - 1} \,.
      \label{eq:m-first-derivative}
    \end{align}

    Taking another derivative,
    \begin{align}
      \frac{d^2 m_{0}^{[q, k]}}{da^2}
      &=
      k \frac{d}{da} \sum_{\ell=0}^{\lfloor k / 2 \rfloor} \binom{k - 1}{\ell} (c^{[q, k]}_{\ell} - c^{[q, k]}_{\ell + 1}) (1 - a)^\ell a^{k - \ell - 1}  \\
      &=
      k (k - 1) \sum_{\ell=0}^{\lfloor k / 2 \rfloor} \binom{k - 2}{\ell} (c^{[q, k]}_{\ell} - 2 c^{[q, k]}_{\ell + 1} + c^{[q, k]}_{\ell + 2}) (1 - a)^\ell a^{k - \ell - 2}.
    \end{align}

    As a result, \(m^{[q, k]}_{\varepsilon}(a) = a\) has three distinct solutions so long as 
    \begin{equation}
      \left. \frac{d m_{\varepsilon}^{[q, k]}}{da} \right|_{a = (q - 1) / q} = 1.
    \end{equation}

    From \cref{eq:m-first-derivative},
    \begin{align}
      \left. \frac{d m_{0}^{[q, k]}}{da} \right|_{a = (q-1)/q}
      &=
      \frac{k}{q^{k - 1}} \sum_{\ell=0}^{\lfloor k / 2 \rfloor} \binom{k - 1}{\ell} (c^{[q, k]}_{\ell} - c^{[q, k]}_{\ell + 1}) \parens*{q - 1}^{k - \ell - 1} \\
      &= C^{[q, k]}.
    \end{align}
    Therefore,
    \begin{align}
      \left. \frac{d m_{\varepsilon}^{[q, k]}}{da} \right|_{a = (q-1)/q}
      &= 
      \parens*{1 - \frac{\varepsilon}{q - 1}} \left. \frac{d m_{0}^{[q, k]}}{da} \right|_{a = (q-1)/q} - \varepsilon \left. \frac{d m_{0}^{[q, k]}}{da} \right|_{a = (q-1)/q},
    \end{align}
    which shows that
    \begin{equation}
      \varepsilon < \frac{q - 1}{q} \frac{C^{[q, k]} - 1}{C^{[q, k]}} \implies \left. \frac{d m_{0}^{[q, k]}}{da} \right|_{a = (q-1)/q} > 1,
    \end{equation}
    and thus the Lemma holds.
  \end{proof}
  By setting \(q = 2\) in \cref{lem:lower-bound-on-majqk-denoising-threshold} and using \cref{eq:binary-cls}, we recover the Boolean denoising result of \cite{evans2003on-the-maximum}.
  Furthermore, in the case of computation over the Boolean alphabet, this bound is tight;
  we note however that this is not necessarily true for \(q > 2\), in which case stable fixed-points exist above the threshold of \cref{lem:lower-bound-on-majqk-denoising-threshold}, but denoising is no longer possible for the full range \(a \in [0, (q-1)/q)\) as the fixed-point corresponding to the \(((q-1)/q)\)-noisy, uniform distribution, fixed-point becomes stable.
  For brevity, we will refer to the threshold of \cref{eq:maj-qk-transcritical-threshold} as the denoising threshold, or more specifically, the point of transcritical bifurcation, as above this point a non-measure zero subset of the simplex is drawn towards the uniform distribution upon repeated denoising.
  The point of transcritical bifurcation can be contrasted with the ultimate saddle-node bifurcation after which point only one stable fixed-point remains (see \cref{fig:fixed-point-diagram}).
  The positive results for various alphabet sizes \(q\) and fan-ins \(k\) are shown in \cref{fig:denoising-threshold}.

  \begin{figure}
    \centering
    \includegraphics[width=0.65\textwidth]{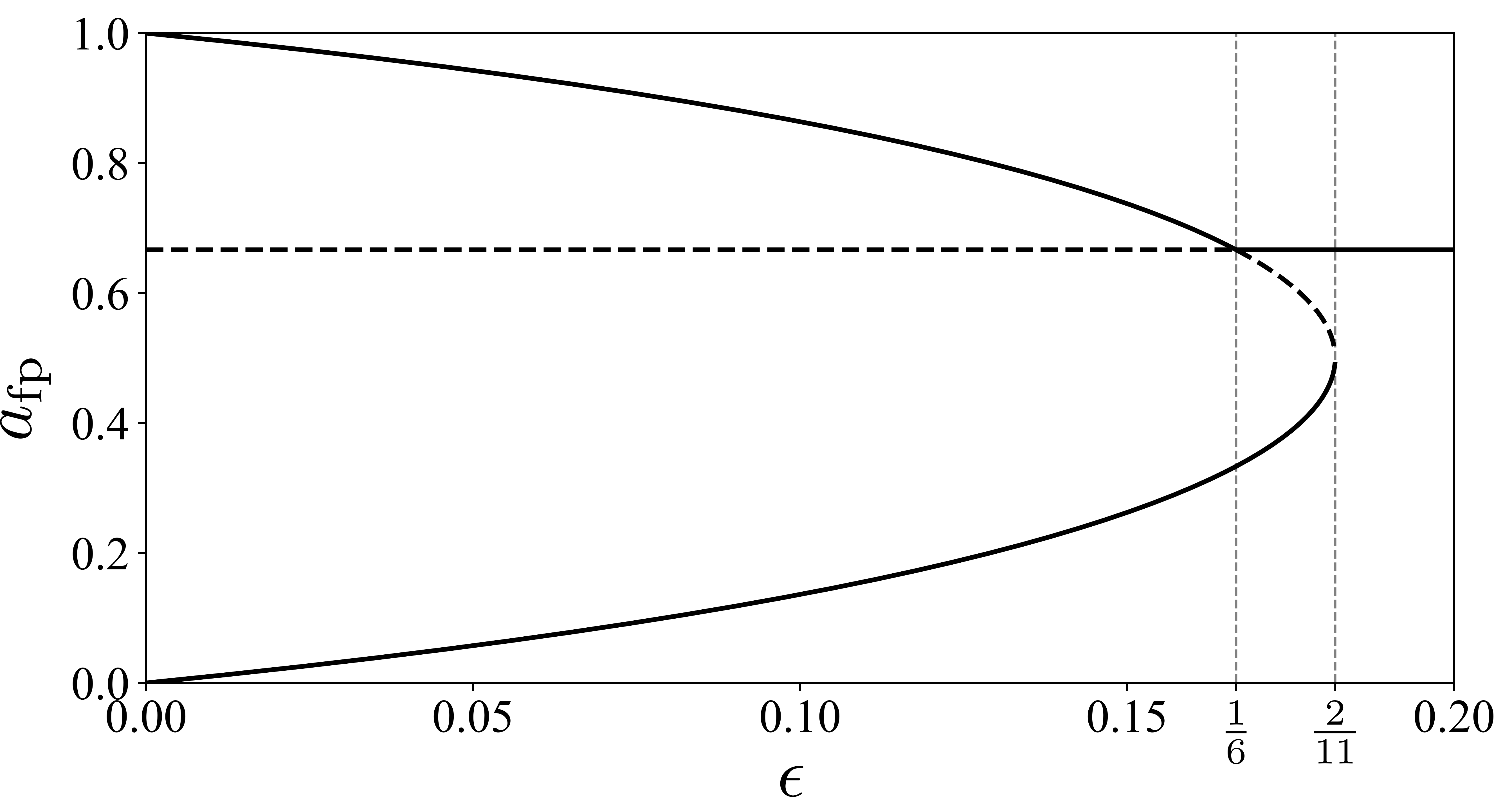}
    \caption{
  Diagram showing stable (solid lines) and unstable (dashed lines) fixed-points of the \(\varepsilon\)-noisy \maj{[3, 3]} gate for symmetrically \(a\)-noisy inputs.
  Note that at the denoising lower-bound of \(\varepsilon = 1 / 6\) (\cref{lem:lower-bound-on-majqk-denoising-threshold}) corresponds to a transcritical bifurcation, at which point the qualitative structure of the fixed-points changes, with the central \((2/3)\)-noisy fixed-point becoming stable.
  Two stable fixed-points persist until the ultimate saddle-node bifurcation at \(\varepsilon = 2 / 11\), corresponding to a discontinuous phase transition.
  }
  \label{fig:fixed-point-diagram}
  \end{figure}

  We now generalize our denoising result to all input distributions (i.e. beyond \(q\)-ary symmetric noise).
  Our approach is to analyze denoising as a discrete-time dynamical system.

  First, we specify a categorical distribution over a \(q\)-ary alphabet as a point in the simplex
  \begin{equation}
    \Delta_q = \left\{\vec{p} \in \mathbb{R}^q \middle| p_i \ge 0, \sum_{i=1}^q p_i = 1 \right\} \subset \mathbb{R}^q.
  \end{equation}
  For independent and identically distributed (i.i.d.) \(q\)-ary input random variables \(X\) with distribution parameterized by vector \(\vec{p} \in \Delta_q\), denote its output \(Y_\varepsilon = \maj{[q, k]}(X_1, \dots, X_k)\) with distribution parameterized by \(\vec{q} \in \Delta_q\).
  Denote the map from input distributions to output distributions \(\mathcal{M}^{[q, k]}: \Delta^q \rightarrow \Delta^q\) such that \(\mathcal{M}(\vec{p}) = \vec{q}\), where we have omitted superscript \([q, k]\) for brevity.

  In this picture, the fixed-points of \cref{lem:lower-bound-on-majqk-denoising-threshold} correspond to points \(\vec\chi^{(i)} \in \Delta_q\) such that \(\vec\chi^{(i)}_i = 1 - \nu_\varepsilon\) and \(\vec\chi^{(i)}_j = \nu_\varepsilon / (q - 1)\) for \(j \ne i\), where again superscripts of \([q, k]\) and subscripts of \(\varepsilon\) have been omitted.

  Further, let us denote by \(\mathcal{R}^{(i)} \subset \Delta_q\), the regions in the probability simplex decoding to logical character \(i\).
  Denoising using the \maj{[q,k]} gate, we will show that it makes sense to define the regions as follows:
  \begin{equation}
    \mathcal{R}^{(i)} = \left\{\vec{p} \in \Delta_q ~\middle|~ \forall j \in [q] \setminus \{i\},~ p_i - p_j > 0\right\}.
  \end{equation}
  That is, \(\mathcal{R}^{(i)}\) corresponds to the region in the probability simplex over \(q\) elements, \(\Delta_q\), where symbol \(i \in [q]\) is the most likely character.

  Our goal is to show that below the denoising threshold, all points in \(\mathcal{R}_0^{(i)}\) flow towards their respective fixed-points \(\vec\chi^{(i)}\) under repeated iteration of the map \(\mathcal{M}\).

  \begin{lemma}[All points in \(\mathcal{R}^{(i)}\) approach \(\vec{\chi}^{(i)}\) upon repeated iteration of \(\mathcal{M}\)]
  \label{lem:symmetric-denoising-basin}
    For \(q \ge 2\), \(k \ge 3\), and \(\varepsilon\) below threshold, all points \(\vec{p} \in \mathcal{R}^{(i)}\) satisfy
    \begin{equation}
      \lim_{n \rightarrow \infty} \mathcal{M}^n(\vec{p}) = \vec{\chi}^{(i)}.
    \end{equation}
  \end{lemma}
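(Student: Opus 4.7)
The plan is to combine three ingredients: forward-invariance of $\mathcal{R}^{(i)}$ under $\mathcal{M}$; reduction on the $S_{q-1}$-symmetric slice to the one-dimensional dynamics of \cref{lem:lower-bound-on-majqk-denoising-threshold}; and contraction of the asymmetries among the non-$i$ coordinates.

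First, I would establish $\mathcal{M}(\mathcal{R}^{(i)}) \subseteq \mathcal{R}^{(i)}$. Because $\maj{[q,k]}$ commutes with simultaneous permutations of alphabet symbols and $q$-ary symmetric noise is $S_q$-equivariant, $\mathcal{M}$ intertwines the $S_q$-action on $\Delta_q$. For any $j \ne i$ with $p_i > p_j$, I would build a monotone coupling between $k$ i.i.d.\ samples from $\vec{p}$ and from its $(i,j)$-swap $\vec{p}'$ — agreeing coordinate-wise except that with probability $p_i - p_j$ an occurrence of $i$ in the first copy is replaced by $j$ in the second — so that $\{\maj{[q,k]}(\vec{X}') = i\} \subseteq \{\maj{[q,k]}(\vec{X}) = i\}$. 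Since $\varepsilon < (q-1)/q$, composing with noise preserves this ordering, yielding $(\mathcal{M}(\vec{p}))_i > (\mathcal{M}(\vec{p}))_j$.

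Second, I would exploit the residual $S_{q-1}$-symmetry stabilizing $i$. The subset of $\mathcal{R}^{(i)}$ on which all non-$i$ probabilities coincide is forward-invariant, and $\mathcal{M}$ restricts there to the one-dimensional map $a \mapsto m_\varepsilon^{[q,k]}(a)$, whose orbits below threshold converge to $\nu_\varepsilon^{[q,k]}$ by \cref{lem:lower-bound-on-majqk-denoising-threshold}. For a general $\vec{p} \in \mathcal{R}^{(i)}$, $S_{q-1}$-equivariance forces $(\mathcal{M}(\vec{p}))_j - (\mathcal{M}(\vec{p}))_{j'}$ to be antisymmetric under the swap $j \leftrightarrow j'$ for $j, j' \ne i$, and hence, viewed as a polynomial in the input probabilities, to factor as $(p_j - p_{j'})\,\Phi_{jj'}(\vec{p})$ for a symmetric $\Phi_{jj'}$. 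Bounding this quotient term by term in the expansion of $\mathcal{M}$, one shows $|\Phi_{jj'}(\vec{p})| < 1$ on any compact subset of $\mathcal{R}^{(i)}$ bounded away from the uniform point $\vec{u}$. Thus the asymmetries among non-$i$ coordinates contract, while the $S_{q-1}$-symmetric shadow $p_i$ of the orbit converges by the one-dimensional argument; together these give $\mathcal{M}^n(\vec{p}) \to \vec{\chi}^{(i)}$.

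The hard part will be uniform control of the contraction factor along the entire orbit, particularly when $\vec{p}$ starts near $\partial \mathcal{R}^{(i)}$, close to the unstable uniform fixed point $\vec{u}$, where the one-step contraction estimate on $\Phi_{jj'}$ degenerates. The hypothesis $\varepsilon < \frac{q-1}{q}\frac{C^{[q,k]}-1}{C^{[q,k]}}$ of \cref{lem:lower-bound-on-majqk-denoising-threshold} ensures that $D\mathcal{M}(\vec{u})$ has all its relevant eigenvalues of modulus strictly greater than $1$, so the orbit is expelled from every neighborhood of $\vec{u}$ in finitely many steps; from then on the uniform contraction applies and the linearization at $\vec{\chi}^{(i)}$ controls the tail. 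Stitching together these two regimes — linear expansion away from $\vec{u}$ and global contraction on compact subsets of $\mathcal{R}^{(i)} \setminus \{\vec{u}\}$ — is the technical crux.
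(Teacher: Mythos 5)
Your route is genuinely different from the paper's, and the difference matters. The paper's argument works one application of $\mathcal{M}$ at a time: using convexity of the monomials appearing in the expansion of $(\mathcal{M}(\vec p))_0$, together with a companion bound on the gap $p_0 - p_1$, it sandwiches the post-denoising error $1 - (\mathcal{M}(\vec p))_0$ between quantities evaluated at the $S_{q-1}$-symmetric input $p_1 = \cdots = p_{q-1} = (1-p_0)/(q-1)$. This reduces the entire orbit analysis to a comparison against the scalar map $a \mapsto m_\varepsilon^{[q,k]}(a)$ of \cref{lem:lower-bound-on-majqk-denoising-threshold}, with no linearization, no spectral computation at the maximally mixed point, and no two-regime stitching. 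Your plan --- forward-invariance via a coupling, reduction to the symmetric slice, factoring the non-$i$ asymmetries as $(p_j - p_{j'})\Phi_{jj'}(\vec p)$ and showing $|\Phi_{jj'}| < 1$, then gluing a local-repulsion argument at the maximally mixed point onto that global contraction --- is a recognizable alternative, but it contains a genuine gap, and it is exactly the one you flag at the end.

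Two claims are left unproved and neither is routine. First, the uniform contraction $|\Phi_{jj'}(\vec p)| < 1$ on compact subsets of $\mathcal{R}^{(i)}$ bounded away from the maximally mixed point is asserted to follow by ``bounding this quotient term by term,'' but no bound is exhibited; it is not clear a priori that the asymmetry factor contracts monotonically rather than transiently growing in intermediate regions, and the argument requires a contraction constant uniform along the whole orbit, not just pointwise. Second, the repulsion-then-contraction splice requires a uniform exit time from a fixed neighborhood of the maximally mixed point and a guarantee of no re-entry, precisely where your contraction estimate degenerates; since that point lies on the boundary of $\mathcal{R}^{(i)}$ and orbits may start arbitrarily close to it, this is the most delicate regime, and ``expelled in finitely many steps'' does not by itself give the uniformity needed to splice with a global Lipschitz bound. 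The paper's one-step sandwich is engineered to sidestep exactly this difficulty, whereas your plan names it as the technical crux and stops there. Until $|\Phi_{jj'}| < 1$ and the stitching are actually established, the proof is not complete.
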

  \begin{proof}
    Without loss of generality, consider a \(q\)-ary random variable with distribution \(\vec{p} \in \mathcal{R}^{(0)}_0\) with elements in sorted order (i.e. \(i > j \implies p_i \ge p_j\)).
    After one application of the map \(\mathcal{M}\), we have \(\vec{p'} = \mathcal{M}(\vec{p})\).

    Note that \(p'_0\) is a convex combination of convex functions of the form 
    \begin{equation*}
      p_0^{k_0} p_1^{k_1} \dots p_{q-1}^{k_{q-1}},
    \end{equation*}
    and is therefore itself jointly convex in \(p_1, \dots, p_q\) and for fixed \(p_0\) is maximized at the boundary \(p_i = (1 - p_0) / (q - 1)\) for all \(i > 0\).
    Intuitively, for fixed \(p_0\), the minimum chance of confusion, and therefore maximum amplification, is achieved when all erroneous inputs are equally likely.
    As a result, the probability of error after denoising \(\alpha' \equiv 1 - p_0'\) is minimized for the case of symmetric noise studied in \cref{lem:lower-bound-on-majqk-denoising-threshold},
    \begin{equation}
      \label{eq:error-lower-bound}
      \alpha' \ge m^{[q, k]}_\varepsilon(1 - p_0).
    \end{equation}

    We can devise a complementary bound by noticing that, for fixed gap \(\delta_1 \equiv p_0 - p_1\), the gap \(\delta_1\) is also convex in the region \(\mathcal{R}^{(0)}\) since \(p_0 > p_1\).
    Here, note that \(\delta_1'\) is minimized for symmetrically noisy inputs \(p_i = (1 - p_0) / (q - 1)\) for all \(i > 0\).
    As a result, the gap after denoising is minimized for
    \begin{align}
      \delta_1' = p_0' - p_1' &\ge (1 - m^{[q, k]}_\varepsilon(1 - p_0)) - \frac{m^{[q, k]}_\varepsilon(1 - p_0)}{q - 1},\\
      \label{eq:error-upper-bound}
      \implies \alpha' = 1 - p_0' &\le m^{[q, k]}_\varepsilon(1 - p_0) - \parens*{p_1' - \frac{m^{[q, k]}_\varepsilon(1 - p_0)}{q - 1}}.
    \end{align}
    Together, the bounds \cref{eq:error-upper-bound} and \cref{eq:error-lower-bound} along with the result for symmetrically noisy inputs of \cref{lem:lower-bound-on-majqk-denoising-threshold}, gives the desired result.
  \end{proof}

  \begin{figure}
    \centering
    \includegraphics[width=0.99\textwidth]{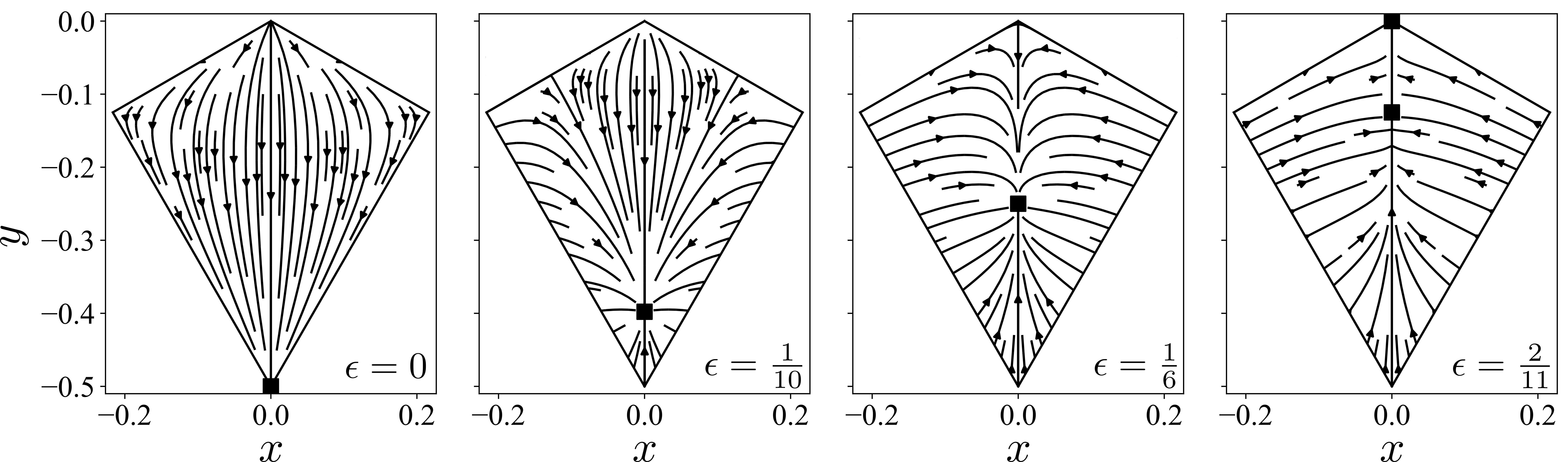}
    \caption{
    Streamlines of the vector field \(\bar{\mathcal{M}}_\varepsilon(x, y) - (x, y)\) for four values of \(\varepsilon\) in the region \(\bar{\mathcal{R}}^{(2)}\) along with stable fixed-points (square markers).
    The picture in regions \(\bar{\mathcal{R}}^{(0)}\) and \(\bar{\mathcal{R}}^{(1)}\) can be obtained by symmetry. 
    Note that the fixed-point at the transcritical bifurcation (\cref{lem:lower-bound-on-majqk-denoising-threshold}) \(\varepsilon = 1/6\) is \((1/3)\)-noisy (third plot).
    Between the transcritical bifurcation and saddle-node bifurcation (i.e. \(1 / 6 < \varepsilon < 2 / 11\)), denoising is still possible for a strict subset of the original region \(\bar{\mathcal{R}}^{(2)}\) (fourth plot) owing to the emergence of a fourth stable fixed-point at the center of the simplex.
    }
  \label{fig:denoising-stream-plot-symmetric}
  \end{figure}

  For visualization purposes, and for consistency with \cref{sec:universal-ternary-computation}, we introduce a transformed set of coordinates for \(\Delta_3\), defined by
  \begin{equation}
  \label{eq:xy-reparameterization}
    x = \frac{\sqrt{3}}{4}(p_1 - p_0), 
    \qquad\text{and}\qquad
    y = \frac{3}{4}\left(p_0 + p_1 - \frac{2}{3}\right).
  \end{equation}
  We will denote the action of the transformation using barred variables \(\bar{\mathcal{M}}_\varepsilon\), and the corresponding logical regions \(\bar{\mathcal{R}}^{(i)}\).
  Note that, in these coordinates, the origin \(x = y = 0\) corresponds to the uniform probability distribution \((1/3, 1/3, 1/3)\), and the level curves of the coordinate \(p_2\) are horizontal lines. 
  A visualization of the denoising dynamics is shown for \(q = k = 3\) in \cref{fig:denoising-stream-plot-symmetric}.

\subsection{Computation threshold}
\label{ssec:symmetric-computation-threshold}
  First, we consider the class of operations that may be performed using only gates that preserve the symmetric noise of \cref{def:qary-symmetric-noise} and show that such gates have the property that they can be computed up to the ultimate threshold of \((q - 1) / q\) for all \(k \ge 2\).
  While this class of operations is not itself universal, it may be augmented to a universal set using gates which can be computed reliably up to threshold \((q - 1) / q\) asymptotically as \(k \rightarrow \infty\).

  \begin{definition}[Symmetric noise preserving gates]
  \label{def:symmetric-noise-preserving-gates}
    Let \(\{X_i\}_{i=1}^k\) be independent, but not necessarily identically distributed, \(a_i\)-noisy encodings of \(\{\hat{x}_i\}_{i=q}^k\).

    A gate \(g: [q]^k \to [q]\) is said to be \emph{symmetric noise preserving (SNP)}, if the output
    \begin{equation}
      Y = g(X_1, \dots, X_n),
    \end{equation}
    is \(a\)-noisy where \(a\) is a function only of \(\{a_i\}_{i=1}^k\).

    Note that since the matrix for \(q\)-ary symmetric channel of \cref{def:a-noisy} is invertible for \(\varepsilon < (q - 1) / q\), this property holds independent of \(\varepsilon\) and therefore nothing is lost by considering noiseless gates.
  \end{definition}

  \begin{lemma}[SNP gates are not universal for \(q > 2\)]
  \label{lem:non-universality-of-snp}
    The clone of symmetric noise preserving functions is not universal for alphabet of size \(q > 2\).
  \end{lemma}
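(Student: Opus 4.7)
The plan is to exhibit an invariant preserved by every SNP gate (and hence by any formula composition of them) that is violated by a natural target function, witnessing non-universality.

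First, I would establish balancedness as the core invariant: every non-constant SNP gate \(g\) maps independent uniform inputs to a uniform output. Setting \(a_i = (q-1)/q\) in \cref{def:symmetric-noise-preserving-gates} makes each \(X_i\) uniform independently of \(\hat{x}_i\), so the output distribution of \(g(X_1, \dots, X_k)\) does not depend on \(\hat{x}\); yet SNP requires this output distribution to be \(a\)-noisy about \(g(\hat{x})\) for every choice of \(\hat{x}\), which forces \(a = (q-1)/q\), i.e.\ output uniform. This is exactly the maximally-mixed fixed-point observation of \cref{rem:maximally-mixed-fixed-point}. By induction on formula depth, any read-once composition of non-constant SNP gates computes a balanced function \(\Phi\), meaning \(\abs{\Phi^{-1}(y)} = q^{n-1}\) for every \(y \in [q]\).

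Next, I would exhibit a non-balanced witness. A natural choice is \(\min : [q]^2 \to [q]\) for \(q > 2\): its preimage sizes \(\abs{\min^{-1}(y)} = 2(q - y) - 1\) are strictly decreasing in \(y\) and hence unequal, so \(\min\) is not balanced and cannot be computed by any read-once SNP formula.

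The main obstacle lies in handling the full clone, which permits diagonal substitution (a single variable feeding multiple leaves); this can turn a balanced function into an unbalanced one, e.g.\ \(g(x, x) = 2x \pmod{4}\) from the SNP gate \(g(x, y) = x + y\). To close this loophole I would strengthen the invariant: for prime \(q\), I would show that every SNP gate is affine over \(\mathbb{F}_q\), by combining the Latin-hypercube structure in each coordinate (forced by SNP when only one input is noisy) with the additional polynomial constraints that SNP imposes on simultaneously noisy inputs. Since the clone of affine functions over \(\mathbb{F}_q\) is classically non-universal --- it omits \(\textsc{mul}^q\) --- and \(\min\) is not \(\mathbb{F}_q\)-affine, this witnesses non-universality. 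For composite \(q\), an analogous polymorphism-based argument places the SNP clone inside a proper subclone of all functions on \([q]\), with \(\min\) still serving as a witness.
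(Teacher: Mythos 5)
Your opening step is sound and is essentially a weaker cousin of the paper's own invariant. The paper shows something stronger: every SNP gate $g$ has the property that every one-variable restriction $g_{\vec{c},i}$ is either constant or a bijection (by feeding $0$-noisy inputs on coordinates $j\neq i$ and a slightly noisy input on coordinate $i$, so that any ``missing value'' other than $\hat{g}$ breaks symmetry of the output noise). Your uniform-input argument gives balancedness, which follows from the paper's slice property; both witness that each non-constant SNP gate lies in a proper set of operations.

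Where you genuinely add something is in flagging the diagonal-substitution problem, which the paper's proof brushes past. The paper asserts, without justification, that ``the set of SNP functions is closed under composition and therefore forms a clone.'' Your example $g(x,x) = 2x \pmod 4$ is in fact a counterexample to that assertion: $x+y$ is SNP, projections are SNP, yet $2x \pmod 4$ is neither balanced nor SNP (for $\hat{x}=0$ and an $a$-noisy $X$, the output $2X$ hits $\{0,2\}$ only, so it is not $b$-noisy for any $b$). So the obstacle you identify is a real gap, and the paper's own proof does not close it either.

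However, the repair you propose does not work. Your strengthened invariant --- that every SNP gate is affine over $\mathbb{F}_q$ for $q$ prime --- is false as soon as $q>3$. For any permutation $\sigma$ of $[q]$, the gate $(x,y)\mapsto \sigma(x+y \bmod q)$ is SNP: $X+Y$ is symmetrically noisy about $\hat{x}+\hat{y}$ (this is exactly \cref{lem:pseudo-additive}), and post-composing with a bijection $\sigma$ merely relabels the $q-1$ equal off-peak probabilities, so $\sigma(X+Y)$ is symmetrically noisy about $\sigma(\hat{x}+\hat{y})$. When $q>3$ there are $q!$ permutations but only $q\,\varphi(q)$ affine bijections $x\mapsto ax+b$, so non-affine $\sigma$ exist, and $\sigma(x+y)$ is a non-affine SNP gate. (The paper's remark at the end of \cref{lem:pseudo-additive} that $\text{PA}\subsetneq\text{SNP}$ for $q>3$ is precisely an acknowledgment of these gates.) For $q=3$ every permutation of $[3]$ is affine, so your invariant might hold there, but the lemma is claimed for all $q>2$. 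Thus the Latin-hypercube structure you invoke does not pin SNP gates down to affine maps, and without a replacement invariant that is actually preserved under diagonal substitution, your argument, like the paper's, stops short of proving non-universality of the clone (as opposed to the set) of SNP gates.
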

  \begin{proof}
    Let \(g\) be a \(k\)-input SNP gate.
    Then for all indices \(1 \le i \le k\), and all choices \(\vec{c} = (c_j)_{j \ne i}\), the restricted function \(g_{\vec{c}, i}: [q] \rightarrow [q]\),
    \begin{equation}
    \label{eq:snp-condition}
      g_{\vec{c}, i}(x) \equiv g(c_1, \dots, c_{i-1}, x, c_{i+1}, c_k),
    \end{equation}
    must either be a constant function or a bijection.

    To see this, suppose there is a function \(g\) which does not satisfy \cref{eq:snp-condition} for some \(i\) and \(\vec{c}\).
    Consider inputs \(X_j\) which are \(0\)-noisy encodings of \(\hat{x}_j = c_j\) for all \(j \ne i\);
    and \(X_i\) an \((a_i > 0)\)-noisy encoding of \(\hat{x}_i\).
    Further let \(\hat{g} = g_{\vec{c}, i}(\hat{x}_i)\).
    Since \(g_{\vec{c}, i}\) is neither constant nor a bijection, for \(q > 2\) there are \(y, y' \in [q] \setminus \hat{g}\) such that \(\Pr[G = y] = 0\) and \(\Pr[G = y'] > 0\).
    Therefore \(G = g(X_1, \dots, X_k)\) is not a \(a\)-noisy.

    Furthermore, the set of SNP functions is closed under composition and therefore forms a clone.
    Therefore any functions computed using only SNP gates must satisfy \cref{eq:snp-condition}, and therefore SNP gates are not universal for \(q\)-ary computation.
  \end{proof}
  If we place additional constraints on the inputs, there may exist larger sets of gates that preserve symmetric noise.
  The denoising gate of \cref{ssec:symmetric-denoising-threshold}, is a notable example which does not satisfy the conditions of \cref{lem:pseudo-additive} but nevertheless preserves symmetric noise for identically distributed inputs.

  We now consider the following class of functions that generalize the Boolean \textsc{xor} over larger alphabets:
  \begin{definition}[Pseudo-additive functions]
  \label{def:pseudo-additive}
    A \(k\)-input \(q\)-ary function \(g\) is called \emph{pseudo-additive (PA)} if it can be written in the following form:
    \begin{equation}
    \label{eq:pseudo-additive}
      g(x_1, \dots, x_k) = \sum_{i=1}^k \sigma_i(x_i) \pmod{q},
    \end{equation}
    where each \(\sigma_{i}: [q] \to [q]\) is either constant function or a bijection.
  \end{definition}
  
  Note that the set of pseudo-additive functions is closed under composition and therefore forms a clone.
  Additionally, we show that they preserve symmetric noise.
  \begin{lemma}[\(\text{PA} \subset \text{SNP}\)]
  \label{lem:pseudo-additive}
    A \(k\)-input \(q\)-ary pseudo-additive function is symmetric noise preserving.
  \end{lemma}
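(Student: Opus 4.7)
The plan is to reduce the lemma to two elementary facts: (i) applying a bijection or constant to a single symmetric-noisy encoding produces a symmetric-noisy encoding, and (ii) modular addition of two independent symmetric-noisy encodings produces a symmetric-noisy encoding. Together with the fact that the pseudo-additive form factors the computation as component-wise $\sigma_i$ followed by modular summation, this lets us peel off one input at a time via induction on $k$.

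For (i), first note that if $\sigma_i$ is constant, then $\sigma_i(X_i)$ equals some fixed $c \in [q]$ deterministically, which is trivially a $0$-noisy encoding of $c$. If instead $\sigma_i$ is a bijection, then for any $y \in [q]$ we have $\Pr[\sigma_i(X_i) = y] = \Pr[X_i = \sigma_i^{-1}(y)]$, which equals $1 - a_i$ when $y = \sigma_i(\hat{x}_i)$ and $a_i/(q-1)$ otherwise; so $\sigma_i(X_i)$ is an $a_i$-noisy encoding of $\sigma_i(\hat{x}_i)$. In both cases the noise parameter depends only on $a_i$ (indeed, it equals either $0$ or $a_i$).

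For (ii), let $U$ and $V$ be independent $a$- and $b$-noisy encodings of $\hat{u}$ and $\hat{v}$. By translation invariance of uniform noise on the residues modulo $q$, we may assume $\hat{u} = \hat{v} = 0$. Then a direct case split on whether $U + V \equiv 0 \pmod q$ or not gives
\begin{equation}
\Pr[U + V \equiv 0] = (1-a)(1-b) + (q-1)\cdot\frac{a}{q-1}\cdot\frac{b}{q-1} = (1-a)(1-b) + \frac{ab}{q-1},
\end{equation}
and for any $j \not\equiv 0 \pmod q$,
\begin{equation}
\Pr[U+V \equiv j] = (1-a)\frac{b}{q-1} + \frac{a}{q-1}(1-b) + (q-2)\frac{ab}{(q-1)^2},
\end{equation}
which is manifestly independent of $j$. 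Thus $U + V \pmod q$ is $c(a,b)$-noisy about $0 = \hat{u} + \hat{v}$, with $c(a,b)$ a function only of $a$ and $b$.

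Finally, combining (i) and (ii) by induction on $k$: the variables $\sigma_i(X_i)$ are mutually independent since the $X_i$ are independent, each is symmetric-noisy of known parameter by (i), and summing them modulo $q$ one at a time yields, by repeated application of (ii), a symmetric-noisy encoding of $\sum_i \sigma_i(\hat{x}_i) \pmod q = g(\hat{x}_1, \dots, \hat{x}_k)$ whose noise parameter is a deterministic function of $(a_1, \dots, a_k)$ alone. This is exactly the condition of \cref{def:symmetric-noise-preserving-gates}, so $g$ is SNP. The only nontrivial step is the explicit computation in (ii), and this reduces to bookkeeping with the two distinguished cases ``both summands correct'' versus ``exactly one correct'' versus ``neither correct,'' as written above.
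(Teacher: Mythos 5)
Your proof is correct and follows essentially the same approach as the paper's: an induction on the number of inputs whose base facts are that unary bijections and constants send symmetric-noisy encodings to symmetric-noisy encodings, and that modular addition of two independent symmetric-noisy encodings is again symmetric-noisy, the latter checked by an explicit case split on ``both correct / exactly one correct / neither correct.'' Your use of translation invariance to reduce the addition computation to $\hat{u} = \hat{v} = 0$ is a modest streamlining of the paper's inline chain-rule calculation with $\hat{g}_k$ and $\hat{g}_{k+1}$, but the structure and substance are the same.
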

  \begin{proof}
    First, we show that this condition is sufficient.
    Let \(G_r \equiv \sum_{i=1}^r \sigma_i(X_i) \pmod{q}\) and \(\hat{g}_r \equiv \sum_{i=1}^r \sigma_i(\hat{x}_i) \pmod{q}\).
    First we show that this condition is sufficient for unary gates \(g\).
    If the function is constant, then \(\Pr\bracks*{g(X_1) = y} = \delta_{y, \hat{g}_1}\) and is trivially \(b\)-noisy with \(b = 0\).
    If the function is a surjection, then
    \begin{equation}
      \Pr\bracks*{g(X_1) = y} = 
      \begin{cases}
        1 - a_1, & \text{if } y = \hat{g}_1 \\
        \frac{a_1}{q-1}, & \text{otherwise}
      \end{cases},
    \end{equation}
    and is \(b\)-noisy with \(b = a_1\).

    We proceed by induction on \(k\).
    Suppose the Lemma holds for \(k\)-input functions giving a \(b\)-noisy output.
    \begin{align}
      \begin{split}
      \Pr[&G_{k+1} = y] = \Pr[G_k + \sigma_{k+1}(X_{k+1}) = y \pmod{q}] = \\
      & \Pr[G_k + \sigma_{k+1}(X_{k+1}) = y \pmod{q} | G_{k} = \hat{g}_k, X_{k+1} = \hat{x}_{k+1}] \Pr[G_{k} = \hat{g}_k] \Pr[X_{k+1} = \hat{x}_{k+1}] + \\
      &\Pr[G_k + \sigma_{k+1}(X_{k+1}) = y \pmod{q} | G_{k} \ne \hat{g}_k, X_{k+1} = \hat{x}_{k+1}] \Pr[G_{k} \ne \hat{g}_k] \Pr[X_{k+1} = \hat{x}_{k+1}] + \\
      &\Pr[G_k + \sigma_{k+1}(X_{k+1}) = y \pmod{q} | G_{k} = \hat{g}_k, X_{k+1} \ne \hat{x}_{k+1}] \Pr[G_{k} = \hat{g}_k] \Pr[X_{k+1} \ne \hat{x}_{k+1}] + \\
      &\Pr[G_k + \sigma_{k+1}(X_{k+1}) = y \pmod{q} | G_{k} \ne \hat{g}_k, X_{k+1} \ne \hat{x}_{k+1}] \Pr[G_{k} \ne \hat{g}_k] \Pr[X_{k+1} \ne \hat{x}_{k+1}]
      \end{split}\\
      \begin{split}
        &=\delta_{y \hat{g}_{k+1}} (1 - a_{k+1})(1 - b) + \frac{1 - \delta_{y \hat{g}_{k+1}}}{q - 1} (a_{k+1} + b - 2ab) + \frac{q - 2 + \delta_{y \hat{g}_{k+1}}}{(q - 1)^2} a_{k+1} b,
      \end{split}
    \end{align}
    where the second equality is an application of the chain rule of probability; and the third equality makes use of assumptions that \(G_{k}\) and \(\sigma_{k+1}(X_{k+1})\) are \(b\)-noisy and \(a_{k+1}\)-noisy (or \(0\)-noisy if \(\sigma_{k+1}\) is constant) respectively, as well as the fact that modular addition randomizes equally among all elements of the alphabet.

    We then have
    \begin{equation}
      \Pr[G_{k+1} = y] = 
      \begin{cases}
        (1 - a_{k+1})(1 - b) + \frac{a_{k+1} b}{q - 1}, & \text{if } y = \hat{g}_{k+1}\\
        \frac{a_{k+1} + b - 2a_{k+1} b}{q - 1} + \frac{(q - 2) a_{k+1} b}{(q - 1)^2} & \text{otherwise}
      \end{cases},
    \end{equation}
    and is therefore symmetric noise preserving i.e. \(\text{PA} \subseteq \text{SNP}\).

    Comparing with \cref{def:symmetric-noise-preserving-gates}, we see that for \(q > 3\), the containment is strict.
  \end{proof}

  \begin{theorem}
  \label{thm:reliable-computation-of-pa-functions}
    The clone of pseudo-additive functions can be reliably computed using \(\varepsilon\)-noisy single-input permutation gates, the single-input constant function, \(2\)-input modular addition gates, and majority gates \(\maj{[q,k]}\)~gates up to the majority gate denoising threshold of \cref{lem:lower-bound-on-majqk-denoising-threshold}.
  \end{theorem}
  \begin{proof}
    First, we show that \(\varepsilon\)-noisy pseudo-additive functions \(g\), given \(a\) noisy encodings, maintain their output in the correct basin of attraction for \(\varepsilon < (q - 1)/q\).

    Let \(\sigma_\varepsilon: [q] \rightarrow [q]\) denote an \(\varepsilon\)-noisy permutation gate or constant gate and \(X\) be an \(a\)-noisy encoding of \(\hat{x}\).
    If \(\sigma\) is a constant gate, then its output is always \(\varepsilon\)-noisy and therefore the output can be denoised up to \(\varepsilon < (q - 1) / q\).
    Therefore we focus on the case of a permutation \(\sigma\), in which case \(\sigma_\varepsilon(X)\) is a \(b\)-noisy encoding of \(\sigma_0(\hat{x})\), where
    \begin{equation}
      b = \varepsilon + \bracks*{1 - \parens*{\frac{q}{q - 1}} \varepsilon} a.
    \end{equation}
    Therefore for all \(q \ge 2\),
    \begin{equation}
      a < \frac{q - 1}{q} \implies b < \frac{q - 1}{q}.
    \end{equation}
    Therefore the permutation gate can reliably compute up to the denoising threshold.

    Next, consider the \(\varepsilon\)-noisy modular addition gate \(\textsc{add}_\varepsilon^{(q)}(x_1, x_2) = x_1 + x_2 \mod{q}\).
    If its inputs \(X_i\) are \(a_i\)-noisy encodings of \(\hat{x}_i\) for \(i \in \{1, 2\}\), then \(\textsc{add}_\varepsilon^{(q)}(X_1, X_2)\) is \(b'\)-noisy with
    \begin{equation}
      b' = \varepsilon + \frac{\parens*{a_1 (q - 1) + \parens*{1 - a_1} a_2 q - a_2} (q (1 - \varepsilon) - 1)}{(q - 1)^2}.
    \end{equation}
    Therefore for all \(q \ge 2\),
    \begin{equation}
      a_1 < \frac{q - 1}{q}, \text{ and } a_2 < \frac{q - 1}{q} \implies b' < \frac{q - 1}{q}.
    \end{equation}
    This shows that the output under either \(\{\sigma, \textsc{add}^{q}\}\) always remains in the correct basin of attraction for \(\maj{[q,k]}\) if its inputs are sufficiently close to an \(a\)-noisy encodings for some \(a < (q-1) / q\).
    This output can then be reliably denoised through repeated application of \(\maj{[q,k]}\).
    As the set of gates \(\{\sigma, \textsc{add}^{q}\}\) generates the clone of PA functions, the clone of PA functions can be reliably computed for \(\varepsilon < (q - 1) / q\) for all \(k\).
  \end{proof}

  While the set of \(\text{SNP}\) gates are not themselves universal (\cref{lem:non-universality-of-snp}), they can be augmented into a universal set using a general \(2\)-input operation over alphabet \(q\) \cite{lau2006function}.
  All such operations can be reliably computed up to the ultimate threshold of \((q - 1) / q\) as \(k \rightarrow \infty\); for example, by using redundant inputs to perform native error correction (in the spirit of the \xnand~for \(q = 2\)).
  \begin{remark}[Reliable universal computation in the \(k \rightarrow \infty\) limit]
  \label{rem:reliable-computation-limit}
    Suppose that \(g: [q] \times [q] \rightarrow [q]\) is a \(2\)-input gate, and suppose we are able to prepare signals \(X\) and \(Y\) which are sufficiently close to \(a\)-noisy with \(a < (q - 1) / q\).
    Further suppose that all gates are \(\varepsilon\)-noisy for \(\varepsilon < (q - 1) / q\).
    We remark that there exists an integer \(K\), such that there exists a gate \(g^*\) with fan-in \(2K\) which can be used to compute \(g\) on inputs \(X_1, \dots, X_K\) and \(Y_1, \dots, Y_K\), maintaining the output in the correct part of the probability simplex for all inputs, where \(X_i\) and \(Y_i\) are i.i.d. versions of \(X\) and \(Y\) respectively.
    The idea is for \(g^*\) to be a function that natively performs error correction on its inputs, i.e. \(g^*\) computes \(g(x, y)\) where \(x = \maj{[q,k]}(X_1, \dots, X_K)\) and \(y= \maj{[q,k]}(Y_1, \dots, Y_K)\).

    \noindent
    Then, its output
    \begin{equation}
      Z^{(K)} = g^*(x_1, \dots, x_K, y_1, \dots, y_k)
    \end{equation}
    errs with a probability that scales as \(e^{-\Theta(K)}\) and we can always choose a \(K\) such that the output of the \(\varepsilon\)-noisy \(g^*_\epsilon\) is always within the correct part of the probability simplex.
  \end{remark}
  
  Thus for any \(\varepsilon < (q - 1) / q\), there exists a sufficiently large fan-in \(K^*\) for which reliable universal computation is possible, i.e. choose \(K^*\) such that \(\varepsilon\) is below the denoising threshold \cref{lem:lower-bound-on-majqk-denoising-threshold} and below the threshold required by \cref{rem:reliable-computation-limit} for a set of \(2\)-input gates which augment the set of PA functions into a universal set.

\subsection{Example \(k = 3\) and \(q\) prime}
\label{ssec:symmetric-computation-threshold-example}
  In addition to the asymptotic result argued in \cref{ssec:symmetric-computation-threshold}, we show that the denoising threshold may be achieved for finite \(k\).
  In fact, we demonstrate that for \(k = 3\) and \(q\) prime, reliable universal computation can be achieved up to the denoising lower-bound of \cref{lem:lower-bound-on-majqk-denoising-threshold}.
  \begin{lemma}
  \label{lem:symmetric-computation-threshold-example}
    Reliable universal computation over an alphabet of size \(q\), where \(q\) is prime, can be performed using \(\varepsilon\)-noisy \(2\)-input pseudo-additive gates, \(2\)-input modular multiplication gates, and majority gates \(\maj{[q,3]}\)~gates up to the majority gate denoising threshold of \(\varepsilon < (q - 1)/(q (q + 1))\) given by \cref{lem:lower-bound-on-majqk-denoising-threshold}.
  \end{lemma}
  \begin{proof}
    First, we calculate the denoising threshold, i.e. the point of transcritical bifurcation for the \(\maj{[q, 3]}\)~gate.
    Following the result in \cref{lem:lower-bound-on-majqk-denoising-threshold},
    \begin{equation}
    \label{eq:small-c-k=3}
      c^{[q, 3]}_0 = 1, \qquad
      c^{[q, 3]}_1 = 1 - \frac{q - 2}{3 (q - 1)}, \qquad
      c^{[q, 3]}_2 = 0, \qquad\text{and}\qquad
      c^{[q, 3]}_3 = 0.
    \end{equation}
    This gives a denoising threshold lower-bound of
    \begin{equation}
      C^{[q, 3]} = \frac{q + 1}{q} \implies \beta^{[q, 3]} \ge \frac{q - 1}{q} \frac{C^{[q, 3]} - 1}{C^{[q, 3]}} = \frac{q - 1}{q (q+1)}.
    \end{equation}
    Substituting the coefficients from \cref{eq:small-c-k=3} into \cref{eq:restoring-output-noise-noisy}, we find that at the transcritical bifurcation, the outputs are \((1/q)\)-noisy.
    As an aside, a similar calculation shows that the saddle-node bifurcation occurs for \(k = 3\) at 
    \begin{equation} 
      \varepsilon = \frac{q - 1}{5q - 4},
    \end{equation}
    with \((1/2)\)-noisy outputs, which can be seen as the ultimate denoising threshold.

    From \cref{thm:reliable-computation-of-pa-functions}, we have that pseudo-additive functions can be calculated up to the denoising threshold.
    To obtain a lower-bound for reliable universal computation, consider augmenting the pseudo-additive gates with the \(2\)-input modular multiplication operation:
    \begin{equation}
      \textsc{mul}^{q}(X_1, X_2) = X_1 \times X_2 \pmod{q}.
    \end{equation}

    To see this, note that the most likely error is that the output of two non-zero elements is mistaken for `\(0\).'
    For symmetrically noisy \(a\)-noisy inputs and prime \(q\),
    \begin{align}
      p_{\text{correct}} &= (1 - a)^2 + \frac{q - 2}{(q-1)^2} a^2, \\ 
      p_{0} &= \parens*{\frac{a}{q - 1}}^2 + 2 \parens*{\frac{a}{q - 1}} \parens*{1 - \frac{a}{q - 1}},
    \end{align}
    where, in calculating these probabilities, we have used the fact that all elements in \([q]\) have inverses for \(q\) prime.
    We find that \(p_{\text{correct}} > p_0\) requires \(a < 1 - 1 / \sqrt{q}\).

    Taken together with the error rate of the denoising operation from above, we find that for prime \(q\) and \(k = 3\), reliable universal computation is achievable up to the denoising threshold of \(\varepsilon < (q - 1)/(q (q + 1))\).
  \end{proof}
  Notably, this is a consequence of the fact that the distribution does not tend to the uniform distribution fixed-point at the point of the transcritical bifurcation, as in \cref{fig:fixed-point-diagram} for the \(q = 3\) case.
  This is in stark contrast with the previously studied Boolean case (\(q = 2\)).
  Practically, this offers some advantages: instead of requiring a uniquely designed gate such as the \xnand~to achieve the optimal computation threshold, for \(k = 3\) and \(q\) prime, the standard modular multiplication operation can be reliably computed up to the transcritical bifurcation of \cref{eq:maj-qk-transcritical-threshold}.

\section{Universal Ternary computation using error signaling}
\label{sec:universal-ternary-computation}
  In this Section, we design a set of gates for performing universal ternary computation, showing that we can improve on the fault-tolerance threshold given in \cite{evans1998on-the-maximum} if we add a third alphabet character to perform error signaling. 
  In \cref{ssec:projecting-gates}, we construct projecting gates that allow us to rewrite any ternary computation problem as a binary computation problem. Then, in \cref{ssec:boolean-denoising-threshold} and \cref{ssec:boolean-computation-threshold}, we construct a denoising gate and a computation gate that are sufficient for universal Boolean computation over the logical alphabet. In \cref{ssec:lifting-gates}, we construct a lifting gate that allows us to lift the result of the computation to a ternary output. Finally, in \cref{ssec:proving_ternary_computation}, we prove the following theorem, which is the main result of this Section. 
  \begin{theorem}
    \label{thm:universal-ternary-computation}
    For \((k = 2)\)-input gates over an alphabet of size $q = 3$, reliable universal computation over the ternary alphabet is possible using the following $\varepsilon$-noisy gates up to the denoising threshold of $\varepsilon < 1/6$: 
    \begin{itemize}
      \item projection gates $\projone_\varepsilon,\projtwo_\varepsilon$,
      \item denoising gates $\den_\varepsilon$,
      \item computation gates $\errornand_\varepsilon$,
      \item lifting gates $\lift_\varepsilon$.
    \end{itemize}
  \end{theorem}
\noindent Throughout this Section, we assume that \(\varepsilon < 1/6 \).

  \subsection{Projecting gates}
  \label{ssec:projecting-gates}
  Consider projecting gates \(\projone: [3]^2 \to [3], \projtwo: [3]^2 \to [3]\) with truth tables given in \cref{tab:ternary-q3-k2-projectors}. For each ternary input $x$ to the computation that we are interested in, we will duplicate $x$ and compute \(\projone_\varepsilon(x,x)\) and $\projtwo_\varepsilon(x,x)$ (the $\varepsilon$-noisy versions of $\projone,\projtwo$). It remains to be checked that $\projone_\varepsilon$ is more likely to output the least significant bit of the binary representation of $x$ than its complement, while $\projtwo_\varepsilon$ is more likely to output the most significant bit of the binary representation of $x$ than its complement. 
  This is necessary and sufficient because we will be able to denoise both outputs using the denoising gate if and only if each output is more likely to be correct than not, as we shall see in \cref{lem:fixed-point-convergence}. Noting that applying symmetric noise does not change the ordering of which physical alphabet elements are most likely to be outputted, it suffices to consider the versions of the gates without noise, and it is easy to see that for each possible input $x \in [3]$, the outputs of $\projone(x,x)$ and $\projtwo(x,x)$ return the least and most significant bits of the binary representation of $x$ respectively.

  \begin{table}[htb]
    \centering
    \setlength{\tabcolsep}{5pt}
    \begin{tabular}{|c|c|c|c|}
    \hline
    $x$ & $y$ & \(\projone(x,y)\) & \(\projtwo(x,y)\) \\
    \hline
    0 & 0 & 0 & 0\\
    \hline
    0 & 1 & 2 & 0\\
    \hline
    0 & 2 & 0 & 2\\
    \hline
    1 & 1 & 1 & 0\\
    \hline
    1 & 2 & 2 & 2\\
    \hline
    2 & 2 & 0 & 1\\
    \hline
    \end{tabular}
    \caption{
    Truth table for the projection gates $\projone, \projtwo$. 
    Note that \(\projone, \projtwo\) are symmetric, i.e. \(\projone(x, y) = \projone(y, x)\) and \(\projtwo(x, y) = \projtwo(y, x)\).
    }
  \label{tab:ternary-q3-k2-projectors}
  \end{table}
  \subsection{Denoising threshold}
  \label{ssec:boolean-denoising-threshold}
  Consider a denoising gate \(\den: [3]^2 \to [3]\) with truth table given in \cref{tab:binary-q3-k2-denoiser}, which performs denoising for two logical states over a ternary alphabet, encoded using the physical alphabet characters `\(0\)' and `\(1\)'.
  The third physical alphabet character, `\(2\)', can be understood as an error flag, while the first two physical alphabet characters constitute the logical alphabet.
  This \(\den\) gate is designed to be used to combine multiple redundant computations to correct for any possible errors. Therefore, we can think of this gate as having realizations from two independent and identically distributed random variables as input, where the mode of the random variable is the value the inputs would take if no errors occurred in the computation. We would like the denoising gate to output this ``correct'' value as often as possible. If both inputs to the gate are the same member of the logical alphabet, then it is most likely that both inputs are correct and therefore the gate should output the value of the inputs, while if the inputs are differing members of the logical alphabet we output a `\(2\)' as it is equally likely that an error has occurred for each of the inputs. 
  If one input is a `\(2\)' while the other is a member of the logical alphabet, we output the input that is a member of the logical alphabet, as it is more likely than not that the latter input is correct.
  Finally, if both inputs are `\(2\)'s, we output a `\(2\)' as we do not have any information about which member of the logical alphabet is correct.

  \begin{table}[htb]
    \centering
    \setlength{\tabcolsep}{5pt}
    \begin{tabular}{|c|c|c|}
    \hline
    $x$ & $y$ & \(\den(x,y)\) \\
    \hline
    0 & 0 & 0 \\
    \hline
    0 & 1 & 2 \\
    \hline
    0 & 2 & 0 \\
    \hline
    1 & 1 & 1 \\
    \hline
    1 & 2 & 1 \\
    \hline
    2 & 2 & 2 \\
    \hline
    \end{tabular}
    \caption{
    Truth table for a balanced denoising gate \(\den\) for binary computation over a ternary alphabet. 
    Note that \(\den\) is symmetric, i.e. \(\den(x, y) = \den(y, x)\).
    }
  \label{tab:binary-q3-k2-denoiser}
  \end{table}

  We characterize the behavior of repeatedly applying \(\den\) to a random variable with probability distribution \(p\) over the alphabet \(\{0, 1, 2\}\).
  We parameterize the probability distribution over a ternary-valued random variable \(X\) with the tuple \((p_0, p_1)\) as follows:
  \begin{equation}
    \begin{pmatrix}
      \Pr[X = 0] \\
      \Pr[X = 1] \\
      \Pr[X = 2]
    \end{pmatrix}
    =
    \begin{pmatrix}
      p_0 \\
      p_1 \\
      1 - p_0 - p_1
    \end{pmatrix}.
  \end{equation}
  Let \(\den_\varepsilon\) denote an \(\varepsilon\)-noisy version of \(\den\).
  Then for two identical and independently distributed random variables \(X_1\) and \(X_2\) with distribution given by \((p_0, p_1)\), the distribution of \(\den_\varepsilon(X_1, X_2)\) is given by the vector-valued function
  \begin{equation}
    \mathcal{D}_\varepsilon(p_0, p_1) 
    \equiv
    \begin{pmatrix}
      \Pr[\den_\varepsilon(X_1, X_2) = 0] \\
      \Pr[\den_\varepsilon(X_1, X_2) = 1]
    \end{pmatrix}
    =
    \begin{pmatrix}
      \left(1 - \frac{3\varepsilon}{2}\right) (2 - p_0 - 2 p_1) p_0 + \frac{\varepsilon}{2} \\
      \left(1 - \frac{3\varepsilon}{2}\right) (2 - 2 p_0 - p_1) p_1 + \frac{\varepsilon}{2}
    \end{pmatrix}.
  \end{equation}

  \begin{figure}
    \centering
    \includegraphics[width=0.99\textwidth]{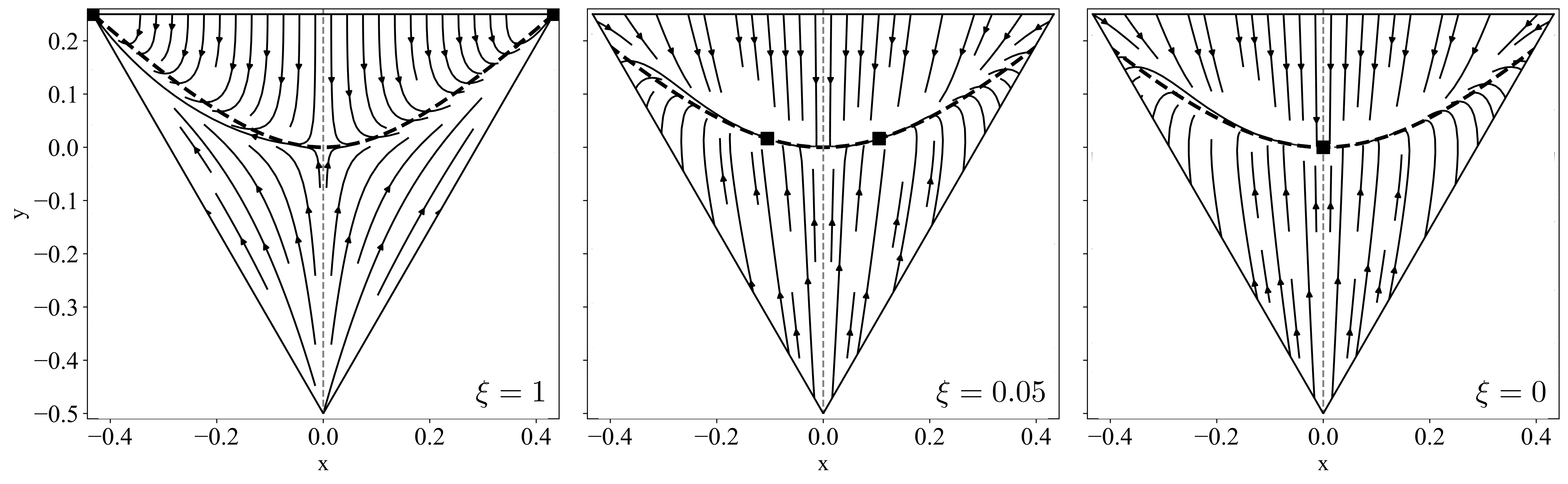}
    \caption{
    Streamlines of the vector field \(\bar{\mathcal{D}}_\varepsilon(x, y) - (x, y)\) for three values of \(\xi\) showing the \(y\) nullcline (dashed black line), stable fixed-points \((x_\pm, y_\pm)\) (square markers), and line separating logical states (dashed grey line).
    }
  \label{fig:denoising-stream-plot}
  \end{figure}

  \begin{lemma} 
  \label{lem:fixed-point-convergence}
    Given a random variable with an initial probability distribution \((p_0, p_1, 1 - p_0 - p_1)\) over \((0, 1, 2)\), we observe that
    \begin{equation}
      \lim_{n \to \infty} \mathcal{D}_\varepsilon^n(p_0, p_1) = 
      \begin{cases}
        \left(\frac{1}{3}, \frac{1}{3}\right)  & \text{if } p_0=p_1, \\
        \left(\frac{(1 - 3 \varepsilon) + \Delta}{2 - 3\varepsilon}, \frac{(1 - 3 \varepsilon) - \Delta}{2 - 3\varepsilon}\right)  & \text{if } p_0 > p_1, \\
        \left(\frac{(1 - 3 \varepsilon) - \Delta}{2 - 3\varepsilon}, \frac{(1 - 3 \varepsilon) + \Delta}{2 - 3\varepsilon}\right)  & \text{if } p_0 < p_1,
      \end{cases}
    \end{equation}
    where \(\Delta = \sqrt{(1 - 6 \varepsilon)(1 - 2 \varepsilon)}\).
  \end{lemma}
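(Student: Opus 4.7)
The plan is to classify the fixed points of $\mathcal{D}_\varepsilon$ directly, use the input-swap symmetry to decompose the state space into three forward-invariant strata, and then argue convergence within each stratum. Writing $\xi = 1 - 3\varepsilon/2$, I would first subtract the two scalar fixed-point equations $\mathcal{D}_\varepsilon(p_0, p_1) = (p_0, p_1)$ to obtain the factorization
\begin{equation*}
(p_0 - p_1)\bigl[(2\xi - 1) - \xi(p_0 + p_1)\bigr] = 0.
\end{equation*}
On the diagonal branch $p_0 = p_1 = p$, the equations collapse to a quadratic in $p$ whose only simplex-valued root is $p = 1/3$. On the off-diagonal branch, setting $s := p_0 + p_1 = 2 - 1/\xi$ and summing the two fixed-point equations yields $p_0 p_1 = \varepsilon/(2\xi)$, so $p_0$ and $p_1$ are roots of a quadratic whose discriminant equals $\Delta^2/\xi^2$. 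This recovers (up to normalization) the two asymmetric fixed points in the statement, and they are real and distinct precisely when $\varepsilon < 1/6$.

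Next, a direct computation gives the key identity $p_0' - p_1' = \xi(2 - p_0 - p_1)(p_0 - p_1)$, whose coefficient is strictly positive on the simplex. Hence the sign of $p_0 - p_1$ is preserved under iteration, and the three strata $\{p_0 < p_1\}$, $\{p_0 = p_1\}$, $\{p_0 > p_1\}$ are each forward-invariant. On the diagonal, the dynamics reduce to the one-dimensional map $p \mapsto \xi(2 - 3p)p + \varepsilon/2$; its derivative at $p = 1/3$ vanishes, so the fixed point is super-stable, and a short interval-shrinking argument gives global convergence to $(1/3, 1/3)$ on the diagonal.

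For the off-diagonal case, I would switch to coordinates $s = p_0 + p_1$ and $\delta = p_0 - p_1$, in which the dynamics read
\begin{equation*}
\delta' = \xi(2 - s)\,\delta, \qquad s' = \xi\bigl(2s - \tfrac{3}{2} s^2 + \tfrac{1}{2} \delta^2\bigr) + \varepsilon.
\end{equation*}
A Jacobian computation at the asymmetric fixed point $(s^*, \delta^*) = (2 - 1/\xi,\; \Delta/\xi)$ gives trace $4(1 - \xi) = 6\varepsilon$ and a determinant that, after a short expansion, yields spectral radius strictly less than $1$ for $\varepsilon < 1/6$, establishing local asymptotic stability of each asymmetric fixed point.

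The principal obstacle is upgrading this local result to global convergence on the half-plane $\{\delta > 0\}$ (and symmetrically $\{\delta < 0\}$). Because the $\delta$-multiplier $\xi(2 - s)$ equals exactly $1$ at $s = s^*$, the iterated $\delta$-dynamics are neither uniformly contracting nor uniformly expanding, so convergence hinges on $s$ relaxing toward $s^*$ faster than $\delta$ can drift, which is the content of the super-stability along the $s$-direction on the diagonal. I plan to handle this by first constructing a compact forward-invariant trapping region around $(s^*, \delta^*)$ using the simplex bounds together with the quadratic structure of $s'$, and then exhibiting a weighted Lyapunov function of the form $V(s, \delta) = \alpha(s - s^*)^2 + \beta(\delta - \delta^*)^2$ with weights tuned to the linearized eigenstructure. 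Strict descent of $V$ on the trapping region, combined with the sign-preservation identity, then yields convergence of every orbit in the half-plane to the correct asymmetric fixed point. The admissible weights degenerate as $\varepsilon \uparrow 1/6$, consistent with the transcritical bifurcation that merges the three fixed points, so the argument is sharp at the assumed threshold.
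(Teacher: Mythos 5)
Your fixed-point classification, the sign-preservation identity $p_0' - p_1' = \xi(2-s)(p_0 - p_1)$, and the diagonal-stratum argument all match the paper's strategy (which uses $x = \frac{\sqrt 3}{4}(p_1-p_0)$, so ``$x>0$ preserved'' is exactly your sign-preservation fact, and the paper's $|p'-1/3| = 3(1-\tfrac{3\varepsilon}{2})|p-\tfrac13|^2$ estimate is your super-stability observation). The Jacobian computation at the asymmetric fixed point is also correct; the trace is $6\varepsilon$, the determinant is $-2\varepsilon(1-6\varepsilon)$, and the spectral radius is $3\varepsilon + \sqrt{2\varepsilon - 3\varepsilon^2} < 1$ for $\varepsilon < 1/6$. (You also quietly flagged the normalization typo in the stated fixed point, which is real.)

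The gap is in the final step, and it is fatal to the proposed form of Lyapunov function. You observe (correctly) that the $\delta$-multiplier satisfies $\xi(2-s^*)=1$ exactly. Now consider a point $(s^*,\delta)$ with $\delta \neq \delta^*$ and $\delta > 0$. One iteration gives $\delta' = \xi(2-s^*)\delta = \delta$ exactly, while $s' - s^* = \tfrac{\xi}{2}(\delta^2 - (\delta^*)^2) \neq 0$. Hence for \emph{any} $\alpha,\beta > 0$,
\begin{equation*}
V(s',\delta') = \alpha\left(\tfrac{\xi}{2}\right)^2\!\bigl(\delta^2-(\delta^*)^2\bigr)^2 + \beta(\delta-\delta^*)^2 \;>\; \beta(\delta-\delta^*)^2 = V(s^*,\delta),
\end{equation*}
so $V(s,\delta)=\alpha(s-s^*)^2+\beta(\delta-\delta^*)^2$ strictly increases along the orbit; it cannot be a Lyapunov function, even locally. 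This also shows up at the linear level: $J^T\,\mathrm{diag}(\alpha,\beta)\,J - \mathrm{diag}(\alpha,\beta)$ has $(2,2)$-entry $\alpha\Delta^2 > 0$, so no diagonal quadratic form in $(s,\delta)$ certifies contraction. Passing to the eigenbasis of $J$ would fix the linear problem, but the dominant eigenvalue tends to $1$ as $\varepsilon \uparrow 1/6$, so near threshold the neglected quadratic terms overwhelm the linear contraction, and a locally-defined quadratic Lyapunov function still won't give the global statement on the whole half-plane. The paper sidesteps all of this with a quartic Lyapunov function (in the transformed coordinates $(x,y)$) of the form $V_\xi(x,y)=A(x,y)^2 + B(x,y)^2$ where $A,B$ are polynomials (one quadratic in $x$ and linear in $y$, the other quadratic in $x$ only) that both vanish at the target fixed point; its global monotonicity on the entire logical region is verified by cylindrical algebraic decomposition. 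To repair your argument you would need a Lyapunov candidate that is non-separable and higher than quadratic degree, adapted to the specific algebraic structure of $\bar{\mathcal D}_\varepsilon$, rather than a weighted sum of squared coordinate deviations.
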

  \begin{proof}
    It is straightforward to verify that the limit distributions described above correspond to fixed-points of \(\mathcal{D}_\varepsilon\), and that they are the only fixed-points within the allowable region.

    We first consider the case when \(p_0 = p_1 = p\). 
    By induction, any number of applications of \(\den_\varepsilon\) will maintain equality in the first and second arguments as
    \begin{equation}
      \mathcal{D}_\varepsilon(p, p) 
      =
      \begin{pmatrix}
        p' \\
        p'
      \end{pmatrix},
    \end{equation}
    where \(p' = \left(1 - \frac{3\varepsilon}{2}\right) (2 - 3 p) p + \frac{\varepsilon}{2}\).

  Now note that for \(0 < \varepsilon \le 1/6\) and \(p \in [0, 1/2)\),
  \begin{align*}
    \abs*{p' - \frac{1}{3}} &= \abs*{\left(1 - \frac{3\varepsilon}{2}\right) (2 - 3 p) p + \frac{\varepsilon}{2} - \frac{1}{3}} \\
    &= \abs*{3 \left(1 - \frac{3\varepsilon}{2}\right) \parens*{p - \frac{1}{3}}^2} \\
    &= \abs*{3 \left(1 - \frac{3\varepsilon}{2}\right) \parens*{p - \frac{1}{3}}} \abs*{p - \frac{1}{3}} \\
    & < \abs*{p - \frac{1}{3}}.
  \end{align*}
  Therefore, repeated iteration results in convergence to \(1/3\).
  We conclude that \(\lim_{n \rightarrow \infty} \mathcal{D}_\varepsilon^n(p, p) = (1/3, 1/3)\) for all \(p \in [0, 1/2]\).

    Next, we make use of the parameterization of \cref{eq:xy-reparameterization}.
    In these coordinates, the action of the denoising gate is
    \begin{equation}
    \label{eq:denoising-iteration-xy}
      \bar{\mathcal{D}}_\varepsilon(x, y) 
      \equiv
      (\xi + 3)
      \begin{pmatrix}
        \frac{x(1 - y)}{3} \\
        \frac{x^2 - y^2}{2}
      \end{pmatrix},
    \end{equation}
    where we introduce \(\xi = 1 - 6\varepsilon > 0\) to make computations simpler.
    We consider the dynamics of probability distributions over the ternary alphabet under repeated application of \cref{eq:denoising-iteration-xy} with sequences \(\{(x_i, y_i)\}_{i=0}^{\infty}\), such that \((x_{i+1}, y_{i+1}) = \bar{\mathcal{D}}_\varepsilon(x_i, y_i)\).
    These discrete dynamics can be more easily visualized using a continuous approximation by plotting the streamlines of the vector field \(\bar{\mathcal{D}}_\varepsilon(x, y) - (x, y)\) as in \cref{fig:denoising-stream-plot}.
    
    Our original problem is equivalent to showing that all points \((x_0, y_0)\) in the regions
    \begin{align}
      \mathcal{R}^{(0)} = \left\{(x, y) \middle| -\frac{\sqrt{3}}{4} \le x < 0, -\frac{1}{2} \le y \le \frac{1}{4}, 0 \le 1 + 2 \sqrt{3} x + 2 y \right\}, \\
      \mathcal{R}^{(1)} = \left\{(x, y) \middle| 0 < x \le \frac{\sqrt{3}}{4}, -\frac{1}{2} \le y \le \frac{1}{4}, 0 \le 1 - 2 \sqrt{3} x + 2 y \right\},
    \end{align}
    converge to their respective fixed-points,
    \begin{equation}
      \left(x^{(0)}, y^{(0)}\right) = \left(-\frac{\sqrt{\xi(\xi + 2)}}{\xi + 3}, \frac{\xi}{\xi + 3}\right),
      \qquad\text{and}\qquad
      \left(x^{(1)}, y^{(1)}\right) = \left(\frac{\sqrt{\xi(\xi + 2)}}{\xi + 3}, \frac{\xi}{\xi + 3}\right),
    \end{equation}
    for \(0 \le \xi < 1\).

  We focus on the case \(p_0 < p_1 \implies (x_0, y_0) \in \mathcal{R}^{(1)}\), with \((x_0, y_0) \in \mathcal{R}^{(0)}\) following analogously.
  First, note that \(x > 0\) is preserved under iteration of \cref{eq:denoising-iteration-xy}, and therefore $(x_0, y_0) \in \mathcal{R}^{(1)}$ implies that for all $i > 0$, $(x_i, y_i) \in \mathcal{R}^{(1)}$.
  One may verify using cylindrical algebraic decomposition that the following is a Lyapunov function for the dynamics \cite{sm}:
  \begin{equation}
    V_\xi(x, y) = \left((\xi+3)^2x^2-(\xi+2)(\xi+3)y\right)^2+\left((\xi+3)^2x^2-\xi(\xi+2)\right)^2.
  \end{equation}
  That is, for all $i \ge 0$, $V_\xi(x_{i+1}, y_{i+1}) \le V_\xi(x_i, y_i)$, and the function is strictly minimized at \((x^{(1)}, y^{(1)})\) for all points in \(\mathcal{R}^{(1)}\).
\end{proof}

  The purpose of \cref{lem:fixed-point-convergence} is to show that when $\varepsilon < 1/6$, there exist stable fixed points that are distinguishable from each other that repeated denoising will converge to. (For $\varepsilon \ge 1/6$, this is not the case, and repeatedly applying $D_\varepsilon$ results in convergence to $(\frac{1}{3}, \frac{1}{3})$ regardless of the initial probability distribution.) Then, we can associate these fixed points in the regions \(\mathcal{R}^{(0)}\), \(\mathcal{R}^{(1)}\) with the logical \(0\) and \(1\) state respectively when performing computations.

  \subsection{Computation threshold}
  \label{ssec:boolean-computation-threshold}
  Next, we show that universal boolean computation is possible up to this threshold of \(\varepsilon < 1 / 6\) by providing a universal computation gate that is compatible with the denoising gate~$\den$. We need to design a computation gate that is 
  universal for binary computation over a ternary alphabet. Because the $\nand$~gate is universal over a binary alphabet, we will design a balanced ternary gate whose behavior when applied to the fixed-points given in~\cref{lem:fixed-point-convergence} is analogous to the behavior of a $\nand$~gate.

  To design this gate, we first note that the restriction of the gate to the logical alphabet \(\{0, 1\}\) should be the $\nand$~gate. 
  It remains to assign outputs for cases when one or both inputs are \(2\). In order to keep the gate balanced, we therefore must assign the output
  \(2\) when both inputs are \(2\), and need to determine whether to assign an output of \(0\) or \(2\) when one input is a \(0\) or a \(1\) and the other is a \(2\).

  Consider the $\errornand$~gate, a generalization of the binary $\nand$~gate to a ternary alphabet where one element is used for error signaling with truth table given in \cref{tab:binary-q3-k2-enand}. 

  \begin{table}[htb]
    \centering
    \setlength{\tabcolsep}{5pt}
    \begin{tabular}{|c|c|c|}
    \hline
    $x$ & $y$ & $\errornand(x,y)$ \\
    \hline
    0 & 0 & 1 \\
    \hline
    0 & 1 & 1 \\
    \hline
    0 & 2 & 2 \\
    \hline
    1 & 1 & 0 \\
    \hline
    1 & 2 & 0 \\
    \hline
    2 & 2 & 2 \\
    \hline
    \end{tabular}
    \caption{Truth table for a universal $\errornand$~for binary computation over a ternary alphabet.
    Note that \(\errornand\) is symmetric, i.e. \(\errornand(x, y) = \errornand(y, x)\).
    }
    \label{tab:binary-q3-k2-enand}
  \end{table}

  We verify that this gate applied to the fixed-points given in~\cref{lem:fixed-point-convergence} behaves analogously to a $\nand$~gate acting on the corresponding elements of the logical alphabet.

  \begin{lemma}
  \label{lem:reliable-boolean-computation-over-ternary-alphabet}
    For \((k = 2)\)-input gates over an alphabet of size \(q = 3\), reliable universal boolean computation using $\varepsilon$-noisy $\den_\varepsilon$ and $\errornand_\varepsilon$ gates is possible for \(\varepsilon < 1 / 6\).
  \end{lemma}
  \begin{proof}
    We associate the logical values \(\{0, 1\}\) to the distributions
    \begin{equation}
      \label{eq:def-den-fixed-points}
      P_0 = 
      \begin{pmatrix}
        p_+ \\
        p_- \\
        1 - p_+ - p_-
      \end{pmatrix},
      \qquad\text{and}\qquad
      P_1 = 
      \begin{pmatrix}
        p_- \\
        p_+ \\
        1 - p_+ - p_-
      \end{pmatrix},
    \end{equation}
    where
    \begin{equation}
      \label{eq:def-pplus-pminus}
      p_\pm = \frac{(1 - 3 \varepsilon) \pm \Delta}{2 - 3\varepsilon},
      \qquad
      \Delta = \sqrt{(1 - 6 \varepsilon)(1 - 2 \varepsilon)}.
    \end{equation}
    Note that we are able to generate signals with distributions arbitrarily close to \(P_0\) and \(P_1\) through repeated denoising via \cref{lem:fixed-point-convergence}.

    Once again, we parameterize distributions over the ternary alphabet by using the probabilities corresponding to the two members of the logical alphabet, and let \(\mathcal{G}_\varepsilon: (P_{\text{in}, 0}, P_{\text{in}, 1}) \mapsto P_{\text{out}}\) denote the function mapping input probability distributions to output probability distributions under the \(\varepsilon\)-noisy $\errornand_\varepsilon$~gate;
    further, we let \(\mathcal{G}_\varepsilon^{(0)}\) and \(\mathcal{G}_\varepsilon^{(1)}\) denote the respective probabilities of the output being in the \(0\) or \(1\) state.
    For the computation under the $\errornand_\varepsilon$~gate to be successful, given inputs from \(\{P_0, P_1\}\), the output must land in the appropriate half of the simplex. Note that as in \cref{ssec:projecting-gates} it is sufficient to prove these inequalities for the gates without having added symmetric noise. In other words, for inputs \(u, v \in \{0, 1\}\), \(\mathcal{G}(P_u, P_v) \in \mathcal{R}^{(\nand(u, v))}\), or more explicitly
    \begin{subequations}
    \label{subeqs:enand-inequalities}
      \begin{align}
        \mathcal{G}^{(0)}(P_0, P_0) &< \mathcal{G}^{(1)}(P_0, P_0) \\
        \mathcal{G}^{(0)}(P_0, P_1) &< \mathcal{G}^{(1)}(P_0, P_1) \\
        \mathcal{G}^{(0)}(P_1, P_0) &< \mathcal{G}^{(1)}(P_1, P_0) \\
        \mathcal{G}^{(0)}(P_1, P_1) &> \mathcal{G}^{(1)}(P_1, P_1),
      \end{align}
    \end{subequations}
    thus implementing the binary \nand~gate.
    We find that
    \begin{subequations}
      \begin{align}
        \mathcal{G}(P_0, P_0) &= \left(\frac{2-8\varepsilon + 3\varepsilon^2 - 2\Delta}{(2-3\varepsilon)^2},\frac{2-10\varepsilon+15\varepsilon^2+(2-6\varepsilon)\Delta}{(2-3\varepsilon)^2}\right) \\
        \mathcal{G}(P_0, P_1) &= \left(\frac{\varepsilon(8-21\varepsilon)}{(2-3\varepsilon)^2},\frac{4-26\varepsilon+39\varepsilon^2}{(2-3\varepsilon)^2}\right) \\
        \mathcal{G}(P_1, P_0) &= \left(\frac{\varepsilon(8-21\varepsilon)}{(2-3\varepsilon)^2},\frac{4-26\varepsilon+39\varepsilon^2}{(2-3\varepsilon)^2}\right) \\
        \mathcal{G}(P_1, P_1) &= \left(\frac{2-8\varepsilon + 3\varepsilon^2 + 2\Delta}{(2-3\varepsilon)^2},\frac{2-10\varepsilon+15\varepsilon^2-(2-6\varepsilon)\Delta}{(2-3\varepsilon)^2}\right).
      \end{align}
    \end{subequations}
    Using these expressions, it is straightforward to verify the inequalities in \cref{subeqs:enand-inequalities} for \(\varepsilon < 1/6\).

    Since \(\mathcal{G}_\varepsilon\) is a continuous function of the input probability distributions, we have that for any \(\varepsilon < 1 / 6\) there exists some $\epsilon$-radius ball around the fixed-points of \(\mathcal{D}_\varepsilon\) such that $\errornand_\varepsilon$~can be used to perform computation. (Recall that as long as the output of $\errornand_\varepsilon$ is in the correct half of the probability simplex, we will be able to denoise to the correct fixed point by \cref{lem:fixed-point-convergence}). Consider some output random variable $X_0$ of a $\errornand_\varepsilon$ gate that we would like to denoise. For $i = 0$ to some value $n \in \mathbb{Z}$, we will prepare a copy $X_i'$ that is independent of $X_i$ but has the same distribution over $\{0,1,2\}$, and then let $X_{i+1}=\den_\varepsilon(X_i,X_i')$. 
    By \cref{lem:fixed-point-convergence}, there exists some number $n$ such that repeatedly applying this operation $n$ times results in some random variable $X_n$ such that the distribution of $X_n$ is within a $\epsilon$-radius ball centered at a fixed point of $\mathcal{D}_\varepsilon$. Applying another $\errornand_\varepsilon$ gate to $X_n$ will result in a distribution that is in the correct half of the probability simplex, and can be denoised to the correct logical value using $\den_\varepsilon$.
    This construction of alternating $\errornand_\varepsilon$~gates followed by sufficiently many $\den_\varepsilon$~gates therefore allows arbitrary binary computation to be reliably performed for \(\varepsilon < 1 / 6\).
  \end{proof}

  \subsection{Lifting gates}
  \label{ssec:lifting-gates}
  Finally, we can lift the binary output of the computation to the ternary alphabet. To do this, we use a lifting gate \(\lift: [3]^2 \to [3]\) with truth table given in \cref{tab:binary-q3-k2-lift}.

  \begin{table}[htb]
    \centering
    \setlength{\tabcolsep}{5pt}
    \begin{tabular}{|c|c|c|}
    \hline
    $x$ & $y$ & $\lift(x,y)$ \\
    \hline
    0 & 0 & 0 \\
    \hline
    0 & 1 & 1 \\
    \hline
    0 & 2 & 0 \\
    \hline
    1 & 0 & 2 \\
    \hline
    1 & 1 & 1 \\
    \hline
    1 & 2 & 2 \\
    \hline
    2 & 0 & 2 \\
    \hline
    2 & 1 & 1 \\
    \hline
    2 & 2 & 0 \\
    \hline
    \end{tabular}
    \caption{Truth table for a universal $\lift$~for lifting two logical bits into a ternary output.
    Note that \(\lift\) is  \textit{not} symmetric in the two inputs.
    }
    \label{tab:binary-q3-k2-lift}
  \end{table}
  We must verify that $\lift_\varepsilon$ applied to the fixed points of $\mathcal{D}_\varepsilon$ will result in a random variable that is more likely to take on the value of the correct physical alphabet element than either of the incorrect physical alphabet elements.
  \begin{lemma}
    \label{lem:lifting-gate-correctness}
    Given inputs $i,j$ with $(i,j) \in \{(0,0), (0,1), (1,0)\}$, we have that $\lift_\varepsilon(P_i, P_j)$ is more likely to output $2i+j$ than either other value.
  \end{lemma}
  \begin{proof}
    It suffices to show that this is true for $\lift$, as applying symmetric noise will not affect which output probability is greatest as seen in \cref{ssec:projecting-gates}. Let $\mathcal{F}:(P_{\text{in},0}, P_{\text{in},1}) \to P_\text{out}$ denote the function mapping input probability distributions to output probability distributions under $\lift$, and let $\mathcal{F}^{(i)}$ be the probability of the output being in state $i$. It suffices to show that
  \begin{subequations}
    \label{subeqs:lift-inequalities}
      \begin{align}
        \mathcal{F}^{(0)}(P_0, P_0) &> \mathcal{F}^{(1)}(P_0, P_0),\mathcal{F}^{(2)}(P_0, P_0) \label{eq:lift_ineq_1} \\
        \mathcal{F}^{(1)}(P_0, P_1) &> \mathcal{F}^{(0)}(P_0, P_1), \mathcal{F}^{(2)}(P_0, P_1) \label{eq:lift_ineq_2} \\
        \mathcal{F}^{(2)}(P_1, P_0) &> \mathcal{F}^{(0)}(P_1, P_0), \mathcal{F}^{(1)}(P_1, P_0) \label{eq:lift_ineq_3},
      \end{align}
    \end{subequations}
    for all $0 < \varepsilon < 1/6$.
    We find that
    \begin{subequations}
      \begin{align}
        \mathcal{F}(P_0, P_0) &= \left(\frac{2-11\varepsilon+21\varepsilon^2+(2-3\varepsilon)\Delta}{(2-3\varepsilon)^2},\frac{1-3\varepsilon-\Delta}{2-3\varepsilon}, \frac{8\varepsilon-21\varepsilon^2}{(2-3\varepsilon)^2}\right) \\
        \mathcal{F}(P_0, P_1) &= \left(\frac{5\varepsilon-3\varepsilon^2+3\varepsilon\Delta}{(2-3\varepsilon)^2},\frac{1-3\varepsilon+\Delta}{2-3\varepsilon},\frac{2-8\varepsilon+3\varepsilon^2-2\Delta}{(2-3\varepsilon)^2}\right) \\
        \mathcal{F}(P_1, P_0) &= \left(\frac{5\varepsilon-3\varepsilon^2-3\varepsilon\Delta}{(2-3\varepsilon)^2},\frac{1-3\varepsilon-\Delta}{2-3\varepsilon},\frac{2-8\varepsilon+3\varepsilon^2+2\Delta}{(2-3\varepsilon)^2}\right),
      \end{align}
    \end{subequations}
    and it is straightforward to verify the inequalities in \cref{subeqs:lift-inequalities} for \(\varepsilon < 1/6\).
  \end{proof}

\subsection{Proving \cref{thm:universal-ternary-computation}}
\label{ssec:proving_ternary_computation}
  We now combine $\{\projone_\varepsilon, \projtwo_\varepsilon, \den_\varepsilon, \errornand_\varepsilon, \lift_\varepsilon\}$ to perform universal computation over a ternary alphabet.
  \begin{proof}[Proof of \cref{thm:universal-ternary-computation}]
    Given any ternary computation that we wish to perform, we can first convert the input to binary using the $\projone_\varepsilon$ and $\projtwo_\varepsilon$ gates, giving the binary representation of each input. By \cref{lem:fixed-point-convergence}, we can use $\den_\varepsilon$ to denoise arbitrarily close to the fixed points of $\mathcal{D}_\varepsilon$. By \cref{lem:reliable-boolean-computation-over-ternary-alphabet}, we can then reliably compute the most and least significant bits of the desired ternary output by using $\errornand_\varepsilon$ and $\den_\varepsilon$. Then, we can use $\lift_\varepsilon$ to convert the bits into the desired ternary output via \cref{lem:lifting-gate-correctness}, completing the computation.
  \end{proof}

\section{Concluding Remarks}
\label{sec:conclusion}
  In this paper, we extended a number of positive results for reliable Boolean (\(q = 2\)) computation to the setting of alphabets of size \(q > 2\).

  In \cref{sec:large-alphabet-computation}, we provided a generalization of the results of Evans and Schulman \cite{evans2003on-the-maximum} for alphabets of size \(q > 2\), showing a positive result for the reliable computation.
  Notably, we find that reliable universal computation is possible up to \(\varepsilon < (q - 1) / q\) as \(k \rightarrow \infty\).
  While this result applies only in the case of \(q\)-ary symmetric noise, we note that similar asymptotic arguments apply to generic models of i.i.d. gate noise over alphabets of size \(q\).
  For example, suppose that the probability of maintaining the correct logical state is \(1 - \epsilon\) and the probability of the most likely erroneous state is \(c \epsilon\) for some \(c > 0\). 
  Then the asymptotic \(k \rightarrow \infty\) threshold occurs at the point where the two are equal, i.e. \(\epsilon = 1 / (1 + c)\), which is maximized in the case of symmetric noise \(c = 1 / (q - 1)\).
  General results for finite \(k\) depend on the existence of a properly adapted majority gate and computation gate.
  In the case of \(q\)-ary symmetric noise, we showed positive computation results for finite fan-in \(k = 3\) and \(q\) prime up to the transcritical bifurcation (\cref{lem:lower-bound-on-majqk-denoising-threshold}).
  We conjecture that reliable computation is possible up to the transcritical bifurcation for all \(k\) odd and \(q\) prime.
  Interestingly, and in contrast with the Boolean case \cite{evans2003on-the-maximum}, distinguishable stable fixed-points exist past the transcritical bifurcation (see the \(q = k = 3\) example in \cref{fig:fixed-point-diagram,fig:denoising-stream-plot-symmetric}), leading to the possibility of reliable computation beyond the lower-bound of \cref{lem:lower-bound-on-majqk-denoising-threshold}, though a redefinition of the logical regions in \(\Delta_q\) is required.

  In \cref{sec:universal-ternary-computation}, we extended the work of Evans and Pippenger \cite{evans1998on-the-maximum} for computation using \(2\)-input gates, showing that universal computation is possible for \(\varepsilon < 1 / 6\) using error signaling.
  We would also like to highlight a conceptual difference in the way fault-tolerance comes about for the model of Boolean computation over a ternary alphabet in \cref{sec:universal-ternary-computation}.
  The \den~(\cref{tab:binary-q3-k2-denoiser}) and \errornand~(\cref{tab:binary-q3-k2-enand})~gates in effect use the third element of the physical alphabet as an error flag, signaling uncertainty to subsequent gates, resulting in a nearly two-fold increase in the denoising threshold compared to the analogous result over Boolean alphabets \cite{evans1998on-the-maximum}.
  The use of this third alphabet character is reminiscent of the usage of flag qubits \cite{chao2020flag} and space-time codes \cite{gottesman2022opportunities} in quantum computation, where similar error signaling mechanisms are used to achieve quantum fault-tolerance.
  There are many extensions to the error signaling construction developed in this paper that would be interesting to explore. 
  It would also be interesting to study the optimal logical alphabet size for a given error model over the physical alphabet, and how to design error signaling gates to achieve optimal threshold error rates.

  Finally, increasing the physical alphabet size under the \(q\)-ary symmetric noise model increases the threshold error rate --- for example, the result of \cref{sec:universal-ternary-computation} shows that reliable Boolean computation is possible for \(\varepsilon < \beta = 1 / 6\), nearly double the nominal threshold for standard Boolean computation \cite{hajek1991on-the-maximum,evans1998on-the-maximum}.
  However, we are not advocating generically for performing computation over larger alphabets as nominal error rates are not strictly comparable between different alphabet sizes.
  Since all errors are equally likely by definition under the symmetric noise model, signals are distinguishable for \(\varepsilon < (q - 1) / q\);
  and since distinguishablility is sufficient for reliable computation as \(k \rightarrow \infty\), computation over larger physical alphabets is afforded a natural advantage under the assumption of symmetric noise.
  In practice, error rates need to be grounded in the utilization of some resource \cite{impens2004fine-grained,thaker2005recursive,thaker2008on-using,tan2023resource};
  for example, one could compare the energy required to implement an \(\varepsilon\)-noisy binary circuit with that required to implement an \(\varepsilon\)-noisy ternary circuit.
  The limit of large \(q\) and large \(k\), for which our results yield the most favorable nominal thresholds, comes at the cost of significant resource requirements which need to be taken into account in practice.
  The comparison of error rates across different physical alphabet sizes presents an interesting opportunity for future work.

\acknowledgements{
The authors acknowledge support from the Institute for Artificial Intelligence and Fundamental Interactions (IAIFI) through NSF Grant No. PHY-2019786; and the U.S. Department of Energy, Office of Science, National Quantum Information Science Research Centers, Co-design Center for Quantum Advantage (C2QA) under contract number DE-SC0012704.
}


\nocite{Mathematica}

\bibliographystyle{IEEEtran}
\bibliography{refs}


\clearpage

\appendix

\end{document}